\newcommand{\tw}{\mathsf{tw}}
\newcommand{\Oh}{\mathcal{O}}
\newcommand{\poly}{\ensuremath{\mathsf{poly}}}
\newcommand{\sub}{\subseteq}
\newcommand{\sm}{\setminus}
\newcommand{\conpunless}{{NP} $\sub$ {coNP/poly}\xspace}
\newcommand{\zz}{\ensuremath{\mathbb{Z}}}
\newcommand{\rr}{\ensuremath{\mathbb{R}}}
\newcommand{\nn}{\ensuremath{\mathbb{N}}}
\newcommand{\fcal}{\ensuremath{\mathcal{F}}}
\newcommand{\ccal}{\ensuremath{\mathcal{C}}}
\newcommand{\run}[1]{\ensuremath{2^{\Oh(#1)}\cdot n^{\Oh(1)}}}
\newcommand{\polyrun}{\cdot n^{\Oh(1)}}
\def\DEBUG{true}
\ifdefined\DEBUG{}
\newcommand{\mic}[1]{{\color{blue}{#1}}}
\def\rem#1{{\marginpar{\raggedright\scriptsize #1}}}
\newcommand{\micr}[1]{\rem{\textcolor{blue}{\(\bullet \) #1}}}
\newcommand{\verr}[1]{\rem{\textcolor{purple}{\(\bullet \) #1}}}
\newcommand{\karr}[1]{\rem{\textcolor{orange}{\(\bullet \) #1}}}
\newcommand{\mic}[1]{#1}
\newcommand{\micr}[1]{ }
\newcommand{\verr}[1]{ }
\newcommand{\karr}[1]{ }
\title{Designing Compact ILPs via Fast Witness Verification}
\author{Michał Włodarczyk}{University of Warsaw, Poland}{michal.wloda@gmail.com}{https://orcid.org/0000-0003-0968-8414}{}
\authorrunning{M. Włodarczyk} 
\keywords{integer programming, kernelization, nondeterminism, multiway cut} 
\newcommand{\defparproblem}[4]{
  \vspace{3mm}
\noindent\fbox{
  \begin{minipage}{0.96\textwidth}
  \begin{tabular*}{\textwidth}{@{\extracolsep{\fill}}lr} \textsc{#1}  & {\bf{Parameter:}} #4 \\ \end{tabular*}
  {\bf{Input:}} #2  \\
  {\bf{Question:}} #3
  \end{minipage}
  }
  \vspace{3mm}
}
\begin{document}

\maketitle

\begin{abstract}
The standard formalization of preprocessing in parameterized complexity is given by kernelization. 
In this work, we depart from this paradigm
and study a different type of preprocessing for problems without polynomial kernels, still aiming at producing instances that are easily solvable in practice.
Specifically, we ask for which parameterized problems an instance $(I,k)$ can be reduced in polynomial time to an integer linear program (ILP) with $\poly(k)$ constraints.

We show that this property coincides with the parameterized complexity class WK[1], previously studied in the context of Turing kernelization lower bounds.
In turn, the class WK[1] enjoys an~elegant characterization in terms of witness verification protocols: a yes-instance should admit a witness of size $\poly(k)$ that can be verified in time $\poly(k)$.
By combining known data structures with new ideas, we design such protocols for several problems, such as {\sc $r$-Way Cut}, {\sc Vertex Multiway Cut}, {\sc Steiner Tree}, and {\sc Minimum Common String Partition}, thus showing that they can be modeled by compact ILPs.
We also present explicit ILP and MILP formulations for {\sc Weighted Vertex Cover} on graphs with small (unweighted) 
vertex cover number.
We believe that these results will provide a background for a systematic study of ILP-oriented preprocessing procedures for parameterized problems.
\end{abstract}

\section{Introduction}

Kernelization~\cite{fomin2019kernelization} provides a rigorous theory for data preprocessing in the following sense: given an instance $I$ with parameter $k$ we would like to transform it in polynomial time into an equivalent instance $(I',k')$ so that $|I'| + k' \le f(k)$ for a slowly growing function $f$.
Historically, kernelization was meant as just another tool for establishing fixed-parameterized tractability~\cite{Downey95} but later it grew into a mature theory on its own, especially after a hardness framework has been established~\cite{Bodlaender09}.
The main question in this paradigm is: which parameterized problems admit polynomial kernels, that is, when can the function $f$ be estimated as polynomial?
From the practical point of view, once the instance size is reduced, it  may become amenable to brute-force or heuristic methods. 

In recent years, empirical observations have provided evidence that 
analyzing preprocessing through the lens of kernelization might be too restrictive.
In the annual PACE challenge~\cite{pace19, pace22, pace21, pace24}, co-organized with the IPEC conference, competitors solve a collection of instances of a certain NP-hard problem related to parameterized complexity.
It turned out that the many successful submissions~\cite{Bannach24pace, bergenthal22pace, boehmer24pace, DirksGRSSS21pace, dobler24pace} do not follow the algorithms from FPT handbooks but instead reduce the problem to (mixed) integer linear programming (ILP) and then employ a specialized solver.
Even though the contestants are not allowed to use the state-of-the-art industrial solvers, already the open source solvers often outperform sophisticated FPT algorithms with provable worst-case guarantees.
However, it is usually challenging to model the problem in question as a concise ILP.

In our opinion, parameterized complexity should seek an explanation of this phenomenon and understand which problems can be handled with
ILP-driven preprocessing.
In this work, we make a first step in this direction and investigate one of the possible ways of formalizing such a~notion of preprocessing.
Even though modeling NP-hard problems as succinct (mixed) ILPs has been a subject of operation research
since 1950s~\cite{chenapplied, dantzig1959linear, floudas2005mixed, goemans1993catalog, Miller60, Nemhauser88, williams2002linear}, 
no systematical study has addressed this question from the perspective of parameterized complexity so far. 

\subparagraph*{What kind of preprocessing?}
Following the design of kernelization theory, we consider polynomial-time procedures that output an equivalent instance (of a possibly different decision problem) that should be well-structured as long as the
initial parameter is not too large. 
While there are many ways to measure the difficulty of an ILP~\cite{JansenK15, Artmann17, Cunningham07, Ganian18, fiorini25, EisenbrandHKKLO25, BrandKO21, Lassota20, CslovjecsekKLPP24, Dvorak17}, 
the most natural are the number of constraints and the number of variables.
In this work, we focus on the first one.

Specifically, we are interested in processing
 an instance $(I,k)$ in polynomial time to rewrite it as an ILP of the form  $(Ax \le b,\, x \in \zz^n_{\ge 0})$, 
where $A \in \zz^{m \times n},\, b \in \zz^m$, and $m$ is polynomially bounded in $k$.
In other words, we would like to reduce solving the problem in question to feasibility check of an ILP with only $\poly(k)$ constraints.
This strategy matches the assumption guiding the kernelization paradigm: when the parameter $k$ is small, then heuristic algorithms should work fast in practice when the search space has ``dimension''\footnote{Usually the ``dimension'' of an ILP refers to the number of variables but the known FPT algorithms~\cite{EisenbrandW20, JansenR23} for solving ILP parameterized by the number of constraints $m$ perform computations in a lattice of dimension~$m$.} $\poly(k)$.
Naturally, it would be desirable to obtain ILP formulations with $m = \Oh(k)$ or $m = \Oh(k^2)$ but we believe that a new framework should first focus on establishing general bounds and more 
fine-grained results will follow.
For comparison, the first work on polynomial kernelization for {\sc Feedback Vertex Set} only guaranteed a~kernel of size $\Oh(k^{11})$~\cite{Burrage06} but it was soon improved to $\Oh(k^2)$~\cite{Thomasse10}.

\subparagraph*{What kind of problems?} Before we delve further into technicalities, let us reflect on for which parameterized problems the question above makes sense.
First of all, we are interested in problems without (known) polynomial kernels.
Secondly, any problem that can be reduced as described above must (1) belong to NP and (2) 
admit an FPT algorithm with running time of the form $2^{\poly(k)}\cdot n^{\Oh(1)}$~\cite{JansenR23}.
Restricting to only such problems allows us to assume $\log n \le \poly(k)$ because otherwise the FPT running time above can be estimated as polynomial in $n$.
Later we will also see another property that must be satisfied: an instance $(I,k)$ should admit an NP-witness of size $\poly(k, \log |I|)$.
Therefore, we should not hope to handle structural parameterizations like treewidth due to known barriers~\cite{DruckerNS16}. 

\subparagraph*{Fast witness verification.}

The question outlined above can be formalized 
in the following way: we ask which parameterized problems admit a {\em polynomial parameter transformation} (PPT) 
to ILP feasibility checking parameterized by the number of constraints. 
It is easy to see that {\sc Set Cover} parameterized by the universe size admits such a reduction and we will present a reduction in the opposite direction as well. 
In turn, the class of problems reducible by PPT to {\sc Set Cover} is known as WK[1].
The WK-hierarchy has been introduced by  Hermelin, Kratsch, Sołtys, Wahlström, and Wu~\cite{HermelinKSWW15} in the context of Turing kernelization.
What is interesting for us, the class WK[1] comprises those parameterized problems that admit an NP-witness of size $\poly(k)$ whose correctness can be verified in time $\poly(k)$.
This elegant characterization has remained merely a curiosity because the main focus of
the work~\cite{HermelinKSWW15} was to prove problem hardness with respect to the WK-hierarchy\footnote{This characterization is implicit in~\cite{HermelinKSWW15} so we provide a proof for completeness}. 
We shall exploit this characterization in a different context. 

\subparagraph*{Related work.}
A more expressive variant of ILP allows one to write lower and upper bounds $\ell_i \le x_i \le u_i$ that do not count to the constraint limit (see \Cref{sec:prelim}).
Rohwedder and Węgrzycki~\cite{Rohwedder25} considered such ILP parameterized by the number of constraints and studied fine-grained equivalences with problems such as {\sc Closest String}, {\sc Discrepancy Minimization}, and {\sc Set Multi-Cover}. 
Kernelization of {\sc ILP Feasibility} has been considered in~\cite{Kratsch13, Kratsch14}.
For more applications of ILP in parameterized complexity, see, e.g.,~\cite{chen2024fpt, Gavenciak22, hermelin2021new, knop2018scheduling}. 

Apart from witness verification protocols studied here, there are other models of nondeterministic computation considered in parameterized complexity, mainly in the context of space limitations~\cite{BodlaenderGNS21, ElberfeldST15, PilipczukWrochna18, Wlodarczyk2024}.

\subsection{Our contribution}

We show that the class WK[1] provides an exact characterization of parameterized problems that can be reduced to ILP with few constraints as long as the entries in the corresponding matrix cannot be too large.
The latter restriction is necessary because unbounded entries allow one to express {\sc Unbounded Subset Sum} (which is NP-hard~\cite{lueker1975two}) with just one equality constraint.
More formally, we consider the feasibility problem of the form $(Ax \le b,\, x \in \zz^n_{\ge 0})$, where $A \in \zz^{m \times n},\, b \in \zz^m$,
and define $\Delta(A)$ to be the maximal absolute value of an entry in the matrix $A$.

\begin{restatable}{theorem}{thmILP}
\label{thm:ilp-wk1}
    {\sc Integer Linear Programming Feasibility} is {\sc WK[1]}-complete when: 
    \begin{itemize}
        \item parameterized by $m + \log(\Delta(A))$,
        \item parameterized by $m$ only, under the assumption that $\Delta(A)=1$ and the vector $b$ is given in unary.
    \end{itemize}
\end{restatable}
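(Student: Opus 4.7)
The proof splits into showing WK[1]-membership and WK[1]-hardness. Note that the second item of the theorem concerns a strict restriction of the first one (fixing $\Delta(A)=1$ and requiring $b$ to be given in unary, while the parameter $m$ is dominated by $m+\log\Delta(A)$), so it will be enough to establish membership for the first parameterization and hardness for the second.

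For the hardness direction, the plan is to construct a polynomial parameter transformation from the canonical WK[1]-complete problem, \textsc{Set Cover} parameterized by the universe size $k$ (from~\cite{HermelinKSWW15}), into \textsc{ILP Feasibility} with $m = k+1$, $\Delta(A)=1$, and $b$ in unary. Given sets $S_1,\ldots,S_n \sub [k]$ and budget $\ell$, I would introduce a variable $x_i \in \zz_{\ge 0}$ per set, write one coverage constraint $-\sum_{i:\, e\in S_i} x_i \le -1$ per element $e\in[k]$, and add a single budget constraint $\sum_i x_i \le \ell$. Feasibility is preserved in both directions: a set cover of size at most $\ell$ yields a $\{0,1\}$-solution, and conversely any feasible integer solution produces a set cover of size at most $\sum_i x_i \le \ell$ by taking $T = \{i : x_i > 0\}$. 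The matrix has entries in $\{-1,0,1\}$ and $\|b\|_\infty \le \max(1,\ell) \le n$, so $b$ admits a unary encoding of size polynomial in the input.

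For the membership direction, the plan is to invoke the witness-verification characterization of WK[1] proved earlier in the paper, so it suffices to exhibit an NP-witness of size $\poly(k,\log|I|)$ verifiable in the same time with random access to the input. The key ingredient is the integer Carath\'eodory-type sparsity bound of Eisenbrand and Shmonin: after rewriting the system with slacks as $Ax+s=b$, $(x,s)\in\zz_{\ge 0}^{n+m}$, any feasible instance admits an integer solution whose support has cardinality $\Oh(m\log(m\Delta(A)))$. Combined with Hadamard-type bounds on vertex solutions of rational polyhedra, such a sparse solution can be chosen with entries of absolute value at most $(m\Delta(A)\|b\|_\infty)^{\Oh(m)}$, i.e., of bit-length $\poly(m,\log\Delta(A),\log|I|)$. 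The witness is then the list of nonzero indices (each of $\Oh(\log|I|)$ bits) together with their values, giving total witness size $\poly(k,\log|I|)$. Verification reads the $\poly(k)$ relevant columns of $[A\mid I_m]$ and the vector $b$ via random access and checks $Ax+s=b$ using arithmetic on integers of polynomial bit-length.

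The hardness reduction is routine, so the main obstacle lies in the membership direction: one has to make sure that the combined dependence on $m$, $\log\Delta(A)$ and $\log|I|$ stays polynomial, and in particular, for the second variant where the parameter is only $m$, one has to exploit that $b$ is given in unary so that $\log\|b\|_\infty$ is absorbed into $\log|I|$ rather than into the parameter. Care is also needed to align the witness-verification characterization of WK[1] with how the protocol addresses entries of the original instance.
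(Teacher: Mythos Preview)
Your proposal is correct and follows the same plan as the paper: hardness via the natural \textsc{Set Cover} reduction, and membership via the witness-verification characterization combined with the integer Carath\'eodory sparsity bound plus a norm bound on integer solutions.

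Two small remarks. First, the characterization in \Cref{thm:characterization} has two conditions, and you only address the second; you should also note that the problem is FPT in time $2^{\poly(k)}\cdot|I|^{\Oh(1)}$, which the paper handles via \Cref{prop:ilp:fpt}. Second, your appeal to ``Hadamard-type bounds on vertex solutions of rational polyhedra'' is imprecise, since vertices of the LP relaxation need not be integral; what you actually need is an integer-solution norm bound such as \Cref{prop:ilp:solution-norm} applied to the ILP restricted to the sparse support. The paper does exactly this, and additionally first preprocesses $b$ via the proximity result (\Cref{prop:ilp:proximity}) so that the value part of the witness has bit-length $\poly(k)$ rather than $\poly(k,\log|I|)$ as in your version. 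Either route suffices for the conclusion.
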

\vspace{0.2cm}

Consequently, a parameterized problem admits a {polynomial parameter transformation} to {\sc ILP Feasibility} (under the considered parameterizations) if and only if it belongs to the class WK[1].
Notably, it follows that allowing for polynomially large numbers in an ILP does not lead to any qualitative advantage.
A similar equivalence is known for
the variant with bounded variables~\cite{Rohwedder25} but therein the reduction involves exponentially many variables.
That variant is easily seen to be WK[1]-hard (\Cref{obs:ilp:binary}). 

Hermelin et al.~\cite{HermelinKSWW15} already showed that WK[1] includes problems of local nature,
such as {\sc $k$-Path}, {\sc Connected Vertex Cover}, {\sc Min Ones $d$-SAT}, or {\sc Steiner Tree}, all parameterized by the solution size.
Motivated by \Cref{thm:ilp-wk1}, we show that this class captures many more problems, also those of seemingly non-local nature.
To this end, we employ various data structures to design fast witness verification protocols. 
The definitions of the listed problems and the related  results are discussed in the corresponding sections.

\begin{theorem}\label{thm:list}
    The following parameterized problems belong to the class {\sc WK[1]}:
    \begin{itemize}
        \item {\sc $r$-Way Cut} / solution size,
        \item {\sc Vertex Multiway Cut} / solution size,
        \item {\sc Minimum Common String Partition} / partition size,
        \item {\sc Long Path} / feedback vertex set number,
         \item {\sc Steiner Tree} / number of terminals,
         \item {\sc Optimal Discretization} / solution size.
    \end{itemize}
\end{theorem}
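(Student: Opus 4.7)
For each of the six problems, the plan is to design an NP-certificate of bit-length $\poly(k)$ together with a verification algorithm running in time $\poly(k)$ on a random-access input, which by the characterization of WK[1] recalled in the introduction suffices for membership. Since every problem in the list is in NP and admits an FPT algorithm with running time $2^{\poly(k)} \cdot n^{\Oh(1)}$, we may assume $\log n \le \poly(k)$, so a $\poly(k)$-time verifier can address arbitrary positions of the input and read $\poly(k)$ symbols from it.

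The easy half of the list is handled by direct constructions. For \textsc{Minimum Common String Partition}, the witness lists the $\Oh(k)$ cut positions in each of the two strings together with the matching between the resulting blocks, and the verifier checks block equalities via Karp--Rabin fingerprints or suffix-array queries on a precomputed index. For \textsc{Optimal Discretization}, the witness is the set of at most $k$ chosen axis-aligned lines, and $\poly(k)$ binary searches in a sorted index of point coordinates certify that every bichromatic pair is separated. For \textsc{Steiner Tree} parameterized by the number of terminals, the witness stores the $\Oh(k)$-vertex topology of a minimal Steiner tree together with the lengths of the shortest paths connecting consecutive branching vertices, certified by a shortest-path oracle on $G$; this case is essentially covered by Hermelin et al. For \textsc{Long Path}/fvs number, we fix a feedback vertex set $X$ of size $k$ and decompose the path into at most $\Oh(k)$ maximal segments through the forest $G - X$; each segment traverses a unique tree path between two specified neighbours of $X$, hence is described by two vertex identifiers, and disjointness of the segments is verified via LCA queries in the forest.

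The main obstacle lies in the two cut problems, where the global validity of the cut must be certified by $\poly(k)$ local checks. For \textsc{$r$-Way Cut}, I plan to use as the witness the set $F$ of at most $k$ cut edges together with a labelling of the $\Oh(k)$ endpoints of $F$ by component identifiers of $G - F$; the verifier then checks, using $\Oh(k^2)$ connectivity queries against a suitable oracle on $G$, that endpoints sharing a label are connected in $G - F$ while edges of $F$ have endpoints with distinct labels. The delicate combinatorial step is a structural lemma asserting that any optimal solution admits such $\poly(k)$-sized annotations that uniquely determine the partition of $V(G)$; in particular, extra connected components of $G - F$ lying far from $F$ must be excluded or pre-registered. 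An analogous template, with cut vertices and ``skeleton terminals'' in place of cut edges and labels, handles \textsc{Vertex Multiway Cut}: the witness carries the cut set $S$ together with a $\poly(k)$-sized collection of representatives determining how the terminals partition under deletion of $S$, and the verifier certifies via $\poly(k)$ connectivity queries that no terminal pair survives connected. Designing these annotations so that their consistency is simultaneously necessary and sufficient is the crux of the proof.
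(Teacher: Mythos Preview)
Your overall plan---invoke the witness-verification characterization of WK[1] and design a $\poly(k,\log n)$ protocol for each problem---is exactly the paper's strategy, and your sketches for \textsc{MCSP}, \textsc{Long Path}/FVS, and \textsc{Steiner Tree}/$|T|$ are essentially the paper's proofs (the paper uses the Mehlhorn--Sundar--Uhrig dynamic-string structure rather than a suffix array, and precomputed $O(n^4)$ disjointness tables rather than on-the-fly LCA queries, but these choices are interchangeable).

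There are two genuine gaps. For \textsc{Vertex Multiway Cut} you are missing the key preprocessing step: the paper first applies the reduction of Cygan, Pilipczuk, Pilipczuk, and Wojtaszczyk to bring the number of terminals down to at most $2k$, after which the witness is simply the cut set $X$ and the verifier queries a vertex-failure connectivity oracle on all $\binom{|T|}{2}=O(k^2)$ terminal pairs. Without a bound on $|T|$, your ``representatives'' idea does not obviously suffice: certifying that \emph{every} terminal lies in its own component of $G-X$ seems to require touching every terminal, and no $\poly(k)$-size annotation you describe rules out two unlisted terminals sharing a component. For \textsc{Optimal Discretization}, ``binary searches in a sorted index of point coordinates'' do not certify that every bichromatic pair is separated; there may be $\Theta(n^2)$ such pairs, and one-dimensional binary search does not answer two-dimensional range-emptiness. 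The paper instead precomputes, for every axis-aligned box with corners on halfway coordinates, whether it contains points of both colours, and the verifier simply looks up the $O(k^2)$ cells induced by the guessed lines.

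Two smaller points. For both cut problems you should name the oracle explicitly: it is the deterministic vertex/edge \emph{failure} connectivity oracle of Long and Saranurak, which after polynomial preprocessing processes a single batch deletion $D$ of size $d$ in time $\widetilde{O}(d^2)$ and then answers connectivity queries in $G-D$ in $O(d)$ time; an ordinary dynamic-connectivity structure is not enough. And your \textsc{$r$-Way Cut} certificate needs a correction: an edge of $F$ need not have its endpoints in distinct components of $G-F$ (the cut may be non-minimal), and you still have to explain how the endpoint labels let the verifier \emph{count} the components of $G-F$. The paper sidesteps both issues by having the witness supply, for each of at most $k$ components of $G$ that get split, a set of representative vertices that are pairwise connected in $G$ but pairwise disconnected in $G-X$, with the sizes summing to $r-|\ccal(G)|$.
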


Theorems \ref{thm:ilp-wk1} and \ref{thm:list} imply that all the listed problems can be reduced to ILP with $\poly(k)$ constraints.
In fact, we provide parameterized reductions to {\sc Binary NDTM Halting}, which then reduces to {\sc Set Cover}, which in turn can be readily modeled by a compact ILP.
We remark that the proofs of containment in WK[1] are not technically involved but they exploit dynamic data structures in a novel context.
In our opinion, their value lies in creating a new link between the fields of integer programming, proof protocols, and data~structures.

The results above only provide implicit reductions to ILP where the number constraints is an unspecified polynomial in the parameter.
We complement those with an explicit reduction for {\sc Weighted Vertex Cover / VC}, where the parameter is the vertex cover number of the input graph.
We attain only linearly many constraints, however with a caveat that we require all the variables to be binary.
In addition, we present a mixed integer linear program where the number of integral variables is also linear in the parameter.

\begin{restatable}{theorem}{thmVC}
\label{thm:vc}
    {\sc Weighted Vertex Cover / VC} admits a linear parameter transformation to {\sc Binary ILP Feasibility} parameterized by $m + \log(\Delta(A))$.
    Furthermore, the problem admits a linear parameter transformation to {\sc MILP Feasibility} parametrized by the number of integral variables. 

\end{restatable}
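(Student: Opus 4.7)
The plan is to first extract a small skeleton from the input graph and then write down a compact ILP with vertex-aggregated edge constraints.

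I would begin by computing, in polynomial time, a vertex cover $C \sub V(G)$ with $|C| \le 2k$; this is done by the standard $2$-approximation (both endpoints of a maximal matching) or by the Nemhauser--Trotter LP decomposition. Let $V_0 := V(G) \sm C$, which is an independent set. For every vertex of $G$ I introduce one binary variable indicating membership in the cover: $x_v$ for $v \in C$ and $y_u$ for $u \in V_0$. The key trick is to replace the usual one-per-edge covering inequalities by two big-coefficient aggregated constraints per $v \in C$, namely
\begin{align*}
|N_C(v)|\,x_v + \sum_{v' \in N_C(v)} x_{v'} &\;\ge\; |N_C(v)|, \\
|N_{V_0}(v)|\,x_v + \sum_{u \in N(v)\cap V_0} y_u &\;\ge\; |N_{V_0}(v)|,
\end{align*}
combined with a single budget inequality $\sum_{v \in C} w(v)\,x_v + \sum_{u \in V_0} w(u)\,y_u \le B$. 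If $x_v = 1$ both aggregated inequalities are trivially satisfied, while if $x_v = 0$ they force every $x_{v'}$ (resp.\ every $y_u$) with $\{v,v'\} \in E(G)$ (resp.\ $\{v,u\} \in E(G)$) to take the value $1$, which precisely covers all edges incident to $v$. This yields at most $2|C| + 1 \le 4k + 1$ constraints with integer entries bounded by $\max(|V(G)|, W)$, where $W$ is the maximum vertex weight.

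For the MILP statement I would keep only the $|C| \le 2k$ variables $x_v$ integral and relax every $y_u$ to the interval $[0,1]$, leaving the constraint system unchanged. Binary feasibility trivially implies MILP feasibility. For the converse, given a fractional feasible $(x,y)$ with binary $x$, the aggregated $V_0$-constraint for any $v \in C$ with $x_v = 0$ reads $\sum_{u \in N(v) \cap V_0} y_u \ge |N(v) \cap V_0|$, which together with the box constraints $y_u \le 1$ pins each such $y_u$ to exactly $1$. Rounding every remaining (unforced) $y_u$ down to $0$ only decreases $\sum_u w(u)\,y_u$, because weights are non-negative, so the budget inequality is preserved, and the aggregated edge constraints are clearly unaffected; the outcome is a binary solution of weight at most $B$.

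The step I expect to be the main obstacle is choosing the aggregation coefficients so that simultaneously (i) the value $\Delta(A)$ stays within $\max(|V(G)|, W)$ rather than blowing up to e.g.\ $n \cdot W$, and (ii) the same inequalities still drive the MILP rounding argument regardless of the numerical weights. Using the combinatorial quantities $|N(v)|$ on the $V_0$-side (rather than weight-based coefficients) is exactly what makes the forcing implication $x_v = 0 \Rightarrow y_u = 1$ weight-independent, so the rounding argument goes through even when some vertices have weight zero; everything else is a routine verification that the ILP has a feasible point of value $\le B$ iff $G$ admits a vertex cover of weight $\le B$.
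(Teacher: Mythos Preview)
Your approach is essentially the paper's: aggregate all edges incident to each vertex $v$ of a $2$-approximate vertex cover into a single big-coefficient inequality $\deg(v)\cdot x_v + \sum_{u \in N(v)} x_u \ge \deg(v)$, keep only the cover variables integral for the MILP, and observe that $x_v=0$ together with the box constraints forces all neighbours to value~$1$. The paper uses one such inequality per $v\in C$ (covering both $C$- and $V_0$-neighbours at once), so it gets $|C|+1$ rather than your $2|C|+1$ constraints, but this is cosmetic.

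There is one missing step that matters for the \emph{linear} parameter transformation in the first claim. You correctly bound $\Delta(A)\le \max(|V(G)|,W)$, but the target parameter is $m+\log(\Delta(A))$, and you have not argued that $\log(\Delta(A))=\Oh(k)$. In general $\log n$ is not bounded by any function of the vertex cover number. The paper handles this by the standard preprocessing: since the problem is solvable in time $2^{2k}\cdot n^{\Oh(1)}$, one may first check whether $\log n > 2k$, and if so solve the instance outright in polynomial time; otherwise $\log n \le 2k$, and together with the unary weight assumption this gives $\log(\Delta(A))\le \log n = \Oh(k)$. You should insert this step before writing down the ILP; without it the reduction is only a PPT, not a linear one.
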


As a corollary of \Cref{thm:vc}, we obtain a very compact ILP formulation for unweighted {\sc Vertex Cover / VC}, where the parameter coincides with the solution size $k$.
Here, both the number of variables and constraints can be bounded as $\Oh(k)$, as well as $\Delta(A)$.

\begin{restatable}{corollary}{corVC}
\label{cor:vc}
    {\sc Vertex Cover / VC} admits a linear parameter transformation to {\sc ILP Feasibility} parameterized by $n + m + \Delta(A)$.
\end{restatable}

\subparagraph*{Organization of the paper.}
In \Cref{sec:prelim} we provide background on ILPs and MILPs, together with a discussion on the class WK[1].
\Cref{sec:ilp} is devoted to proving \Cref{thm:ilp-wk1} while
the proof of \Cref{thm:list} is given in Section~\ref{sec:fast1}.
In \Cref{sec:vc} we give a closer look at {\sc Weighted Vertex Cover}.
The characterization of the class WK[1] in terms of fast witness verification (\Cref{thm:characterization}) is proven in \Cref{sec:char}.

\section{Preliminaries}\label{sec:prelim}

On default, we measure the running time of algorithms in the standard RAM model with logarithmic word length.
All the summoned algorithmic results are deterministic.
Throughout the paper, we use the variable $n$ to denote the input size during the reductions or the number of variables when analyzing integer programs.
The number of constraints is denoted by $m$.

\subparagraph*{Integer linear programming.}
For $A \in \zz^{m \times n}$ and $b \in \zz^m$, we consider ILP in the {\em standard form without upper bounds}:
$(Ax \le b,\, x \in \zz^n_{\ge 0})$.
Note that it is allowed to write inequalities ``$\ge$'' as well since one can multiply coefficients by -1.
The problem {\sc Integer Linear Programming  (ILP) Feasibility} asks, given $A, b$, whether there exists a vector $x$ satisfying the system above.
One can also consider ILP in the {\em equality form}: $(Ax = b,\, x \in \zz^n_{\ge 0})$ where $A \in \zz^{m \times n}$ and $b \in \zz^m$.
It is easy to transform a system of inequalities $(Ax \le b,\, x \in \zz^n_{\ge 0})$ into the {equality form} by introducing $m$ new non-negative variables and turning each inequality $\sum a_ix_i \le b$ into equality $\sum a_ix_i + z = b$ where $z$ is a fresh variable.
This transformation does not affect the number of constraints and results in matrix $A'$ obtained from $A$ by inserting $m$ columns with only 0 and 1 entries.
We make note of this observation in the following terms.
Recall that $\Delta(A)$ is the maximal absolute value of an entry in the matrix $A$.

\begin{observation}\label{lem:ilp:equality}
A system of inequalities $(Ax \le b,\, x \in \zz^n_{\ge 0})$, $A \in \zz^{m \times n},\, b \in \zz^m$, can be transformed in polynomial time into an equivalent system of equalities $(A'y = b,\, y \in \zz^{n+m}_{\ge 0})$ where $A' \in \zz^{m \times (n+m)}$ and $\Delta(A') = \max(1, \Delta(A))$.
\end{observation}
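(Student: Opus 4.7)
The plan is to apply the textbook slack-variable transformation. For each row $i \in \{1,\ldots,m\}$, I would introduce a fresh non-negative integer variable $z_i$ and replace the inequality $\sum_{j} a_{ij} x_j \le b_i$ by the equation $\sum_{j} a_{ij} x_j + z_i = b_i$. Collecting the original variables together with the $m$ slacks into a single vector $y \in \zz_{\ge 0}^{n+m}$ and defining $A' = [A \mid I_m]$, where $I_m$ is the identity matrix appended to $A$ on the right, yields the desired system $A'y = b$ with $A' \in \zz^{m \times (n+m)}$.

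Equivalence of the two systems would follow from a direct bijection on feasible points. Given $x \in \zz_{\ge 0}^n$ with $Ax \le b$, the slack values $z_i := b_i - (Ax)_i$ are non-negative integers, so $y = (x, z)$ is a feasible integer point of the equality system. Conversely, any $y \in \zz_{\ge 0}^{n+m}$ satisfying $A'y = b$ projects via its first $n$ coordinates to an $x \in \zz_{\ge 0}^n$ with $Ax = b - z \le b$, since $z \ge 0$. The whole transformation is clearly computable in polynomial time in the bit size of the input.

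The bound on $\Delta(A')$ is immediate once the construction is in place: every entry of $A'$ is either an entry of $A$ (of absolute value at most $\Delta(A)$) or an entry of $I_m$ (which is $0$ or $1$). Hence $\Delta(A') = \max(1, \Delta(A))$, with the $\max$ only active in the degenerate case $A = 0$. There is no real obstacle in this observation; its purpose is merely to record that passing to the equality form costs at most one unit in the entry bound while adding $m$ fresh variables and leaving the number of constraints untouched, which is exactly what is needed for the WK[1]-preserving reductions used later in the paper.
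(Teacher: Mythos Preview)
Your proposal is correct and matches the paper's own argument exactly: the paper introduces one slack variable per inequality to obtain $A' = [A \mid I_m]$, notes that the appended columns contain only $0$/$1$ entries, and records the resulting bound $\Delta(A') = \max(1,\Delta(A))$. There is nothing to add.
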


Another common variant of ILP is the {\em standard form with upper bounds}:
$(Ax \le b,\, x \le u,\, x \in \zz^n_{\ge 0})$ where $u \in  \zz^n_{\ge 0}$.
This form effectively expresses $n$ additional inequalities which are not counted as constraints.
When $u$ is the all-1s-vector, we call the corresponding decision problem {\sc Binary ILP Feasibility}.
All variants admit FPT algorithms parameterized by the number of constraints $m$ and $\Delta(A)$.
On the other hand, parameterization by $m$ alone is unlikely to be FPT even when the input is given in unary~\cite[Theorem 2]{fomin2023optimality}.

\begin{proposition}[{\cite[Theorem 5]{JansenR23}}]
\label{prop:ilp:fpt}
    Feasibility of an ILP in the standard form {\bf without} upper bounds can be decided in time $\Oh(m\cdot \Delta(A))^{2m} \cdot \log (||b||_\infty)$.
\end{proposition}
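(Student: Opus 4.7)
The plan is to first rewrite the system in equality form and then solve feasibility via a Steinitz-based dynamic programming over a bounded set of partial sums. By \Cref{lem:ilp:equality}, the original instance can be converted in polynomial time into an equivalent equality-form system $(A'y = b,\, y \in \zz_{\ge 0}^{n+m})$ with $\Delta(A') = \max(1, \Delta(A))$ and the same number of constraints, so it suffices to prove the bound for this form.

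The key structural tool is the Steinitz lemma: any sequence of vectors in $\rr^m$ of $\ell_\infty$-norm at most $\Delta$ summing to some vector $s$ can be reordered so that every prefix sum lies within $\ell_\infty$-distance $\Oh(m\Delta)$ of the segment from $0$ to $s$. Applied to the column decomposition of a hypothetical feasible solution $y^*$ (viewed as a multiset in which the $j$-th column of $A'$ appears $y^*_j$ times), this yields a walk $0 = v_0, v_1, \ldots, v_N = b$ in $\zz^m$ where each step $v_{i+1}-v_i$ is a column of $A'$ and each intermediate $v_i$ is at $\ell_\infty$-distance $\Oh(m\Delta)$ from the ray $\rr_{\ge 0}\cdot b$. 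Feasibility thus reduces to a reachability question from $0$ to $b$ in a directed graph whose vertices are lattice points in a tube of total size $(m\Delta)^{\Oh(m)}$ around the segment, and whose edges correspond to adding a column of $A'$.

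A direct breadth-first search in this graph would pay a factor linear in $||b||_\infty$. To obtain the $\log ||b||_\infty$ dependence, one applies a doubling trick: at level $\ell$ one records the set of offsets $\delta$ (within the Steinitz tube, of which there are $(m\Delta)^{\Oh(m)}$) that are expressible as a non-negative integer combination of at most $2^\ell$ columns of $A'$. Level $\ell+1$ is then produced by convolving the level-$\ell$ table with itself, which costs $(m\Delta)^{\Oh(m)}$ per entry and thus $(m\Delta)^{\Oh(2m)}$ per level, and $\Oh(\log ||b||_\infty)$ levels suffice to cover the range of $b$. The main technical obstacle I anticipate is the proximity bookkeeping: one must argue that staying inside the Steinitz tube is preserved under the convolution step, so that we never need to enumerate partial sums outside a box of size $(m\Delta)^{\Oh(m)}$ along the way; this is where the factor $2m$ in the exponent, rather than $m$, arises, since each convolution step effectively multiplies two tables of size $(m\Delta)^{\Oh(m)}$.
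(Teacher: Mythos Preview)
The paper does not prove this proposition at all; it is quoted as a black box from \cite[Theorem~5]{JansenR23} and invoked only to supply the running-time bound inside the proof of \Cref{thm:ilp-wk1}. So there is nothing in the paper to compare your argument against.

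That said, your sketch is essentially the argument of the cited reference: reduction to equality form, the Steinitz reordering that keeps all prefix sums in an $\ell_\infty$-tube of radius $\Oh(m\Delta)$ around the segment $[0,b]$, and a doubling scheme over a state space of size $(m\Delta)^{\Oh(m)}$ to bring the dependence on $\|b\|_\infty$ down to logarithmic. The one place where your outline is genuinely loose is the last paragraph: it is not enough to observe that each convolution multiplies two tables of size $(m\Delta)^{\Oh(m)}$; you also need that the \emph{result} of the convolution can again be restricted to offsets of norm $\Oh(m\Delta)$ without losing completeness. Naively, convolving two sets of radius $r$ gives radius $2r$, which would blow up across $\log\|b\|_\infty$ rounds. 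The fix in \cite{JansenR23} is that the Steinitz bound applies to the \emph{whole} walk from $0$ to $b$, so after each doubling one may prune back to the original tube radius and still retain every reachable target---this is the step that keeps the exponent at $2m$ rather than growing with the number of rounds. You correctly flagged this as the technical obstacle, but the resolution is a pruning argument, not merely a size count.
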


\begin{proposition}[{\cite[Theorem 4.1]{EisenbrandW20}}]
\label{prop:ilp:fpt-upper}
    Feasibility of an ILP in the standard form {\bf with} upper bounds can be decided in time $(m\cdot \Delta(A))^{\Oh(m^2)} \cdot n + s^{\Oh(1)}$ where $s$ is the input encoding size.
\end{proposition}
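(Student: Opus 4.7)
My plan is to combine an LP relaxation plus a Steinitz-based proximity bound with a column-by-column dynamic program. First, reduce the input to equality form via Observation~\ref{lem:ilp:equality} (adding at most $m$ slack variables with $0/1$ entries, which does not blow up $\Delta(A)$), and solve the LP relaxation of $\{Ax = b,\ 0 \le x \le u\}$ in time $s^{\Oh(1)}$ by any polynomial-time LP algorithm. If the LP is infeasible, so is the IP; otherwise fix a vertex solution $x^\star \in \qq^n$.

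The crucial technical step is a proximity lemma: whenever the IP is feasible, there is an integer feasible $z$ with $\|z - \lfloor x^\star \rfloor\|_1 \le W$, where $W := m\cdot(2m\Delta(A)+1)^m$. To establish it I would take a feasible $z$ minimizing $\|z - \lfloor x^\star \rfloor\|_1$, set $v := z - \lfloor x^\star \rfloor$, and expand $Av$ as a signed sum of $\|v\|_1$ integer column contributions in $\zz^m$, each of $\ell_\infty$-norm at most $\Delta(A)$. The Steinitz lemma then reorders this sequence so that all partial sums stay inside an $m\Delta(A)$-box; if $\|v\|_1 > W$, pigeonhole on the at most $(2m\Delta(A)+1)^m$ lattice points of that box forces a partial-sum value to repeat, and the subsequence between the two occurrences is a nontrivial integer cycle with zero $A$-sum that can be subtracted from $v$ while staying in $[0,u]^n$, contradicting minimality.

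With proximity in hand, substitute $x = y + \lfloor x^\star \rfloor$; it suffices to find $y \in \zz^n$ with $\|y\|_1 \le W$ satisfying a shifted equality system and shifted box constraints. I would solve this by a layer-by-layer DP on columns, maintaining the set $S_j \subseteq \zz^m$ of partial sums $\sum_{i \le j} A_{\cdot,i}\, y_i$ that extend to a feasible global assignment. A second Steinitz application, now to the columns of every feasible full $y$, yields a column ordering along which all partial sums lie within $\ell_\infty$-distance $\Oh(m\Delta(A))\cdot W = (m\Delta(A))^{\Oh(m)}$ of a reference trajectory; we may therefore confine $S_j$ to an $m$-dimensional box of side $(m\Delta(A))^{\Oh(m)}$, giving $|S_j| \le (m\Delta(A))^{\Oh(m^2)}$. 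Each layer transition enumerates $\Oh(W)$ choices for $y_j$ and updates the state in $\Oh(m)$ time, so summing over $n$ columns yields the combinatorial bound $(m\Delta(A))^{\Oh(m^2)} \cdot n$, plus the $s^{\Oh(1)}$ overhead for the LP and arithmetic.

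The main obstacle is reconciling the two Steinitz applications with the per-coordinate bounds $0 \le y_i + \lfloor x^\star_i \rfloor \le u_i$: a careless reordering or cycle extraction might push some coordinate out of $[0,u_i]$ before a compensating step arrives. The standard remedy is to split $v$ (resp.\ $y$) into a positive and a negative sign-class per coordinate and apply Steinitz independently within each class, so that each coordinate only moves monotonically within its allowed interval; this costs only constant factors in the exponent and preserves the final bound $(m\Delta(A))^{\Oh(m^2)}$. Once this technical point is settled, the two-phase algorithm matches the claimed running time exactly.
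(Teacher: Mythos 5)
This proposition is imported verbatim from Eisenbrand and Weismantel (Theorem 4.1 of the cited work); the paper offers no proof of its own, and your sketch faithfully reconstructs the argument of that source: LP relaxation, Steinitz-based proximity bound $\|z - x^\star\|_1 \le m(2m\Delta(A)+1)^m$, and a layered dynamic program over the columns whose per-layer state count $(m\Delta(A))^{\Oh(m^2)}$ comes from confining partial sums to an $m$-dimensional box of side $(m\Delta(A))^{\Oh(m)}$. One small correction: your ``second Steinitz application'' to bound the DP state space is both unnecessary and slightly dangerous, since reordering columns per solution would break the correspondence with the per-variable bounds $0 \le y_i + \lfloor x^\star_i\rfloor \le u_i$ and would not yield a single ordering valid for all candidate solutions; the clean argument keeps the natural column order and observes that $\bigl\|\sum_{i \le j} A_{\cdot,i}(z_i - x^\star_i)\bigr\|_\infty \le \Delta(A)\cdot\|z - x^\star\|_1$ by proximity alone. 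Also note that a vertex $x^\star$ of the box-constrained LP has at most $m$ fractional coordinates, so passing from $x^\star$ to $\lfloor x^\star\rfloor$ costs only an additive $m$ in the proximity bound, which your constants absorb.
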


In {\em mixed integer linear programming} (MILP) we have two groups of variables: $n$ integral ones and $\ell$ fractional ones.
Let $x[a,b]$ denote the truncation of the vector $x$ to coordinates $[a,b]$.
In a variant with upper bounds we look for a vector $x$ satisfying the system $(Ax \le b,\, x \le u,\, x[1,n] \in \zz^n_{\ge 0},\, x[n+1,n+\ell] \in \rr^{\ell}_{\ge 0})$ where  $A \in \zz^{m \times (n + \ell)}$, $u \in \zz^{n+\ell}$, and $b \in \zz^m$.
When the number $n$ of integral variables is small, one can solve a MILP using an extension of Lenstra's algorithm.
Note that in contrast to parameterization by $m$, now $\Delta(A)$ can be unbounded.

\begin{proposition}[\cite{kannan1987minkowski}]
    Feasibility of a MILP with $n$ integral variables can be determined in time $n^{\Oh(n)}\cdot s^{\Oh(1)}$, where $s$ is the input encoding size.
\end{proposition}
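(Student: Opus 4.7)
The plan is to reduce MILP feasibility to a pure integer program in dimension $n$ (the number of integral variables) and then invoke Lenstra's algorithm, as sharpened by Kannan using LLL-reduced bases, on the resulting polyhedron accessed via a polynomial-time linear programming oracle.

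First, I would write the system as $Ax + By \le b$, $x \le u[1,n]$, $y \le u[n+1,n+\ell]$, $x \in \zz^n_{\ge 0}$, $y \in \rr^{\ell}_{\ge 0}$, and consider the projection
\[
    Q = \brw{x \in \rr^n_{\ge 0}\, : \, x \le u[1,n],\; \exists \, y \in \rr^{\ell}_{\ge 0},\, y \le u[n+1,n+\ell],\, Ax + By \le b}.
\]
By Fourier--Motzkin elimination, $Q$ is a rational polyhedron in $\rr^n$. Deciding $x^\star \in Q$ is a linear programming feasibility question in the original $\ell$ continuous variables and thus runs in time $s^{\Oh(1)}$; moreover, by LP duality (Farkas' lemma) one extracts a separating hyperplane whenever $x^\star \notin Q$. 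Hence $Q$ is represented by a polynomial-time separation oracle, and the original MILP is feasible if and only if $Q \cap \zz^n \neq \emptyset$.

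Second, I would run the Lenstra--Kannan integer programming procedure on $Q$. Recursively, given a convex body $K \subset \rr^n$ accessed via an oracle, compute its John ellipsoid using the ellipsoid method; after applying the affinely associated linear transformation, $K$ is sandwiched between two concentric balls of radii $r$ and $n\cdot r$. If $r$ is large enough, $K$ contains a lattice point that can be produced by rounding. Otherwise, Khinchine's flatness theorem, combined with an LLL-reduced basis of $\zz^n$ in the induced inner product, yields a direction $d \in \zz^n$ along which the lattice width of $K$ is at most $n^{\Oh(1)}$. One then recurses on each of the at most $n^{\Oh(1)}$ parallel affine slices $\brw{x \in K \, : \, d^\top x = t}$, which are $(n-1)$-dimensional bodies inheriting their own separation oracles from that of $K$. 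Unboundedness of $Q$ is handled by a standard a~priori Cramer-type bound of magnitude $2^{s^{\Oh(1)}}$ on the coordinates of any minimal integer solution, added as explicit upper bounds.

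Unwinding the recursion, the branching factor is $n^{\Oh(1)}$ at each of the $n$ levels, producing $n^{\Oh(n)}$ leaves, and each node issues a polynomial number of oracle calls at a cost of $s^{\Oh(1)}$ each; the total running time is therefore $n^{\Oh(n)} \cdot s^{\Oh(1)}$. The main obstacle is the implementation of Lenstra's procedure when $Q$ is only accessible through an oracle rather than through an explicit inequality description, since Fourier--Motzkin elimination could in principle produce exponentially many facets. This is resolved by the Gr\"otschel--Lov\'asz--Schrijver framework: every primitive used by the algorithm (computing John's ellipsoid, reducing a lattice basis in the ellipsoidal norm, and slicing in a flat direction) can be performed using only separation queries, so the complexity estimate depends polynomially on the encoding size $s$ of the \emph{original} MILP and exponentially only on the number $n$ of integral variables.
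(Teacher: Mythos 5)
This proposition is cited from Kannan's paper and not proved in the text, so there is no in-paper argument to compare against; your overall architecture --- project out the continuous variables, note that the projection $Q$ is a polyhedron with a polynomial-time separation oracle obtained from LP duality, and then run a Lenstra-type recursion on $Q$ inside the Gr\"otschel--Lov\'asz--Schrijver oracle framework --- is exactly the standard route by which Lenstra and Kannan themselves handle the mixed case. The reduction to ``$Q \cap \zz^n \neq \emptyset$'', the Farkas-based separation oracle, the boundedness remark (in fact superfluous here, since the paper's MILP variant already carries upper bounds $x \le u$), and the $n^{\Oh(n)}$ leaf count from an $n^{\Oh(1)}$ branching factor over $n$ levels are all correct.

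The one genuine gap is the step where you claim that an \emph{LLL-reduced} basis yields a direction $d$ along which the lattice width of $K$ is $n^{\Oh(1)}$. LLL only approximates the shortest vector (and hence the width-minimizing dual direction) to within a factor $2^{\Oh(n)}$, so even though Khinchine's flatness theorem guarantees that a lattice-point-free body has width $n^{\Oh(1)}$ in \emph{some} direction, the direction LLL actually hands you may have width $2^{\Oh(n)}$. That gives $2^{\Oh(n)}$ slices per level and a total of $2^{\Oh(n^2)} \cdot s^{\Oh(1)}$ --- Lenstra's original bound, not the $n^{\Oh(n)} \cdot s^{\Oh(1)}$ claimed in the proposition. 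Kannan's sharpening consists precisely in replacing LLL by Korkine--Zolotarev-reduced bases, computed via his $n^{\Oh(n)}$-time shortest-vector enumeration; only with that stronger reduction does the per-level branching drop to $n^{\Oh(1)}$ and the recursion close at $n^{\Oh(n)} \cdot s^{\Oh(1)}$. You should either invoke KZ reduction at this point or restate the bound you actually obtain from LLL.
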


\subparagraph*{Class WK[1].}
A {\em polynomial parameter transformation} (PPT) is a reduction
between parameterized problems $P,Q$
that runs in polynomial time and maps instance $(I,k)$ of $P$ to an equivalent instance $(I',k')$ of $Q$ satisfying
$k' \le \poly(k)$.
The  WK-hierarchy has been introduced
as a hardness hierarchy for problems unlikely to admit a polynomial Turing kernel~\cite{HermelinKSWW15}.
For each $t \ge 1$ the class WK[$t$] comprises problems that are PPT-reducible to {\sc Weighted SAT} over a certain class of formulas, parameterized by the number of 1s in an assignment times $\log n$.
Since we are only interested in the class WK[1], it is more convenient to rely on its equivalent characterization as the PPT-closure of the following WK[1]-complete problem~\cite{HermelinKSWW15}.

\defparproblem{Binary NDTM Halting}
{nondeterministic Turing machine $M$ over a binary alphabet, integer $k$}
{can $M$ halt on the empty string in $k$ steps?}
{$k$}

Another canonical WK[1]-complete problem is {\sc Set Cover}.

\defparproblem{Set Cover}
{set $U$, family of sets $\fcal \sub 2^U$, integer $\ell$}
{is there a subfamily $\fcal' \sub \fcal$ of size $\ell$ such that the union of $\fcal'$ is $U$?}
{$|U|$}

WK[1]-completeness of {\sc Binary NDTM Halting} gives rise to a particularly elegant characterization of WK[1] in terms of witness verification protocols.

\begin{definition}
    A {\em witness verification protocol} for a parameterized problem $L$ is a pair of algorithms $(\cal A,B)$ with the following specification.
    
    Given an instance $(I,k)$ the algorithm $\cal A$ runs in polynomial time and outputs a string $s(I,k) \in \{0,1\}^*$ and integer $\ell(I,k)$.
    The algorithm $\cal B$ is given the concatenation of $s(I,k)$ and a string $w \in \{0,1\}^{\ell(I,k)}$ and outputs a yes/no answer.
    We require that $(I,k) \in L$ if and only if there exists $w$ for which   $\cal B$ answers ``yes''.
    
    For a function $f \colon \nn^2 \to \nn$ we say that the protocol has complexity $f$ if (a) $\ell(I,k) \le f(|I|,k)$ and (b) the running time of $\cal B$ on $s(I,k) + w$ is bounded by $f(|I|,k)$.
\end{definition}

The following characterization was stated implicitly in~\cite[\S 1]{HermelinKSWW15}. 
We provide a~proof in \Cref{sec:char} for completeness.

\begin{restatable}{theorem}{thmCHAR}
\label{thm:characterization}
    A parameterized problem $L$ belongs to the class WK[1] if and only if:
    \begin{enumerate}
        \item $(I,k) \in L$ can be decided in time $2^{\poly(k)}\cdot|I|^{\Oh(1)}$, and
        \item $L$ admits a witness verification protocol of complexity $\poly(k, \log |I|)$. 
     \end{enumerate}
\end{restatable}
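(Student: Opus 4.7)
The plan is to prove the two implications separately. For the forward direction, assume $L \in$ WK[1], so that a PPT reduces $(I,k)$ to an instance $(M, k')$ of {\sc Binary NDTM Halting} with $k' \le \poly(k)$. After normalizing $M$ to binary nondeterministic branching by inserting intermediate states, the computation tree has at most $2^{k'}$ leaves, and each branch can be simulated in $\Oh(k')$ time on a tape of length $\Oh(k')$. Enumerating all branches yields condition~(1) with total time $2^{\poly(k)} \cdot |I|^{\Oh(1)}$. For condition~(2), $\cal A$ outputs an encoding $s(I,k)$ of $(M, k')$ and sets $\ell(I,k) = \Oh(k')$; the verifier $\cal B$ interprets the witness as the sequence of nondeterministic choices and simulates $M$ step by step, using RAM random access into $s(I,k)$ at $\Oh(1)$ cost per step, for overall time $\poly(k)$.

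For the backward direction, I would use condition~(1) to handle a large-$|I|$ edge case and condition~(2) for the main construction. Let the decider from~(1) run in time $2^{q(k)} \cdot |I|^{\Oh(1)}$ for some polynomial $q$. If $\log |I| \ge q(k)$, then $2^{q(k)} \le |I|$, so we can solve $(I,k)$ in polynomial time and map it to a trivial instance of {\sc Binary NDTM Halting} (say, a two-state machine that either immediately halts or loops forever, with $k'=1$). Otherwise $\log |I| < q(k)$, which forces $\poly(k, \log |I|) \le \poly(k)$; in particular both $\ell(I,k)$ and the running time of $\cal B$ become $\poly(k)$.

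In this nontrivial case, the reduction outputs an NDTM $M_{I,k}$ that first nondeterministically writes a string $w \in \{0,1\}^{\ell(I,k)}$ onto the tape and then simulates $\cal B$ on the concatenation $s(I,k) + w$. Since $s(I,k)$ has size $|I|^{\Oh(1)}$, writing it out would exceed the step budget, so instead I would hardwire $s(I,k)$ into the transition function of $M_{I,k}$ via a binary-tree lookup gadget: given an index $i$ on the tape, $M_{I,k}$ descends through a tree of states of depth $\Oh(\log |I|)$ and returns the $i$-th bit of $s(I,k)$. This makes $|M_{I,k}|$ polynomial in $|I|$, which is harmless because the parameter of {\sc Binary NDTM Halting} measures only the number of steps. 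Combined with the standard polynomial-overhead simulation of the RAM model inside a TM, each elementary operation of $\cal B$ costs $\poly(\log |I|) = \poly(k)$ TM steps, so $M_{I,k}$ halts within some $k' = \poly(k)$ steps exactly when a valid witness exists.

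The main obstacle will be arranging the simulation of $\cal B$ inside $M_{I,k}$ without exceeding the target $\poly(k)$ step count: $\cal B$ must access arbitrary positions of the large string $s(I,k)$ even though we cannot afford to transcribe $s(I,k)$ onto the tape. The binary-tree lookup gadget resolves this by exploiting the fact that {\sc Binary NDTM Halting} imposes no bound on the size of the machine description, only on the number of computation steps. Once this is in place, the remaining verification is routine bookkeeping on witness length, RAM-to-TM overhead, and the case split on $|I|$.
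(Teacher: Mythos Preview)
Your overall strategy coincides with the paper's: in the forward direction you pass through a canonical WK[1]-complete problem and verify conditions~(1) and~(2) there; in the backward direction you split on $\log|I|$ versus $\poly(k)$, guess the witness nondeterministically, and hardwire the large string $s(I,k)$ into the transition function so that the verifier can be simulated within a $\poly(k)$ step budget. The paper goes through {\sc Set Cover} rather than {\sc Binary NDTM Halting} in the forward direction, and in the backward direction it invokes the Gurevich--Shelah random-access Turing machine (encoding the contents of the short address and auxiliary tapes into the state, and rerouting writes to the main tape onto a separate write-log tape so that the main tape becomes read-only). Your binary-tree lookup gadget is an equivalent mechanism for the read-only part.

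There is, however, one genuine gap in your forward argument for condition~(1). Normalizing $M$ to binary branching by inserting intermediate states \emph{increases the step count}: a transition with fan-out $d$ becomes $\lceil\log d\rceil$ steps, and $d$ can be $\Theta(|Q|)=\Theta(|M|)$. Hence the normalized machine needs $\Theta(k'\log|M|)$ steps, and enumerating its branches costs $2^{\Theta(k'\log|M|)}=|M|^{\Theta(k')}=|I|^{\poly(k)}$, which is not of the form $2^{\poly(k)}\cdot|I|^{\Oh(1)}$. The fix is to drop the enumeration and instead do a bounded breadth-first search over configurations: after at most $k'$ steps starting from the empty tape only $\Oh(k')$ cells can be non-blank, so there are at most $|Q|\cdot 2^{\Oh(k')}\cdot k'$ reachable configurations and one can test whether a halting configuration is reachable within $k'$ steps in time $2^{\Oh(k')}\cdot\poly(|M|)$. (The paper avoids the issue entirely by reducing to {\sc Set Cover}, whose $2^{\Oh(|U|)}\cdot n^{\Oh(1)}$ algorithm is immediate.) In the backward direction your reasoning is sound, but you should make explicit that writes performed by ${\cal B}$---possibly to addresses overlapping $s(I,k)$---are recorded on a $\poly(k)$-size write-log, and that every read first consults this log before falling back on the lookup gadget; the paper spells this step out via an auxiliary tape.
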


\section{WK[1]-completeness of ILP}\label{sec:ilp}

We summon several facts about the structure of solutions to ILP in the equality form.
First we bound the support of a solution using the integral Carath{\'{e}}odory's Theorem.

\begin{proposition}[{\cite[Corollary 5]{EisenbrandS06}}]
\label{lem:ilp:cara}
    Consider an optimization ILP of the form $\min \{c^Tx \colon Ax = b,\, x \in \zz^n_{\ge 0}\}$
    where $A \in \zz^{m \times n}, b \in \zz^m, c \in \zz^n$.
    If this program has a finite optimum, then it has an optimal solution $x^*$ in which the number of non-zero entries is $\Oh(m(\log m + \log(\Delta))$ where $\Delta$ is the maximum absolute value of an entry in $A$ and $c$.
\end{proposition}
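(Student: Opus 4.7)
The plan is to follow the Eisenbrand--Shmonin sparsification argument, whose backbone is a pigeonhole swap on $\{0,1\}$-combinations of the columns in the support of a minimum-support optimum. First, absorb the objective into the constraint matrix: form $\tilde A \in \zz^{(m+1)\times n}$ by appending the row $c^T$ to $A$, and set $\tilde b = (b,\opt) \in \zz^{m+1}$, so that the non-negative integer solutions of $\tilde A x = \tilde b$ are exactly the optimal solutions of the original ILP. Observe that $\tilde\Delta := \Delta(\tilde A) = \max(\Delta(A),\|c\|_\infty) = \Delta$. Among all such optimal solutions, pick $x^*$ with the smallest support $S$, and assume for contradiction that $s := |S| > C(m+1)(\log(m+1) + \log\Delta)$ for a sufficiently large absolute constant $C$.

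Next I would apply pigeonhole to the $2^s$ vectors $v \in \{0,1\}^n$ supported in $S$. For each such $v$, the image $\tilde A v$ lies in the box $[-s\Delta, s\Delta]^{m+1}$, which contains at most $(2s\Delta+1)^{m+1}$ lattice points. For the chosen $C$, one checks that $2^s$ strictly exceeds this count, so there are distinct $v_1, v_2 \in \{0,1\}^n$ supported in $S$ with $\tilde A v_1 = \tilde A v_2$. Setting $y = (v_1 - v_2)_+$ and $z = (v_1 - v_2)_-$ yields $\{0,1\}$-vectors with disjoint non-empty supports inside $S$ and $\tilde A y = \tilde A z$. Since $x^*_i \ge 1$ for every $i \in S$, both $y, z \le x^*$ componentwise.

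The final step is the swap. Let $\alpha = \min_{i \in \text{supp}(z)} x^*_i \ge 1$ and set $x' = x^* + \alpha(y-z)$. Then $x' \ge 0$ by the choice of $\alpha$, $\tilde A x' = \tilde b$ (so $x'$ is still an optimum of the original program), and $\text{supp}(x') \sub S$ because $\text{supp}(y) \sub S$; moreover at least one coordinate in $\text{supp}(z)$ has been driven to zero, so $|\text{supp}(x')| < s$, contradicting the minimality of $x^*$. Hence $s = \Oh(m(\log m + \log\Delta))$, as claimed. The main subtlety is the swap step: the argument produces a strict support decrease only because the pigeonhole is restricted to $\{0,1\}$-vectors supported in $S$, which guarantees that the correction $y-z$ creates no new non-zero coordinates. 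Including $c^T$ as an extra row plays the analogous role of forcing the perturbation to preserve the objective value automatically, so a single pigeonhole count suffices and the extra $\log\|c\|_\infty$ factor is absorbed into $\log\Delta$.
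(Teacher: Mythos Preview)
The paper does not prove this proposition; it is quoted as an external result from Eisenbrand and Shmonin, so there is no in-paper argument to compare against. Your write-up is a faithful and correct rendering of the Eisenbrand--Shmonin pigeonhole-and-swap proof: absorbing $c$ as an extra row, pigeonholing $\{0,1\}$-vectors supported in $S$ into the box $[-s\Delta,s\Delta]^{m+1}$, and swapping to shrink the support. One cosmetic point: your definition $\alpha = \min_{i \in \mathrm{supp}(z)} x^*_i$ tacitly assumes $z \ne 0$; if $v_1 \ge v_2$ componentwise then $z = 0$, so either swap $v_1,v_2$ first or, equivalently, subtract a multiple of the kernel vector $y$ to kill a coordinate of $\mathrm{supp}(y)$.
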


Next, we bound the $\ell_1$-norm of a solution.

\begin{proposition}[{\cite[Lemma 3]{JansenR23}}]
\label{prop:ilp:solution-norm}
    If the system $(Ax = b,\, x \in \zz^n_{\ge 0})$ is feasible, it has a solution $x^*$ satisfying $||x^*||_1 \le \Oh(m\cdot \Delta(A))^m \cdot (||b||_\infty + 1)$.
\end{proposition}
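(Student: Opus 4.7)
The plan is to establish the bound by contradiction using the Steinitz Lemma. Let $x$ be a feasible solution of $(Ax = b,\, x \in \zz^n_{\ge 0})$ with minimum $\ell_1$-norm, and suppose toward contradiction that $||x||_1$ exceeds the target bound. The goal is to exhibit a non-zero $y \in \zz^n_{\ge 0}$ with $y \le x$ componentwise and $Ay = 0$: then $x - y$ would be feasible with strictly smaller $\ell_1$-norm, contradicting minimality.

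First I form the multiset $V$ of $N := ||x||_1$ columns of $A$, where column $A_{\bullet, i}$ appears exactly $x_i$ times. By construction $\sum_{v \in V} v = b$ and every $v \in V$ satisfies $||v||_\infty \le \Delta(A)$. Next I invoke the Steinitz Lemma in the $\ell_\infty$-norm: there is an ordering $v_1, \ldots, v_N$ of $V$ whose partial sums $s_j := v_1 + \ldots + v_j$ stay close to the interpolation segment from $0$ to $b$, with $||s_j - (j/N)\, b||_\infty \le c \cdot m \cdot \Delta(A)$ for some absolute constant $c$.

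The core of the argument is a pigeonhole step on an initial window. Restrict attention to indices in $J := \{0, 1, \ldots, \lfloor N/(||b||_\infty + 1) \rfloor\}$. For every $j \in J$ one has $||(j/N)\, b||_\infty \le 1$, and hence $s_j$ lies in the integer box $[-(cm\Delta(A)+1),\, cm\Delta(A)+1]^m \cap \zz^m$, which contains at most $(2cm\Delta(A)+3)^m$ lattice points. Once $N > (2cm\Delta(A)+3)^m \cdot (||b||_\infty + 1)$, the window $J$ is strictly larger than the box, so two distinct indices $j < k$ in $J$ satisfy $s_j = s_k$. Then $v_{j+1} + \ldots + v_k = 0$; re-expressing this contiguous block in the original column indices yields the required vector $y$ with $y \ne 0$, $0 \le y \le x$, and $Ay = 0$.

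The main obstacle is selecting the right form of the Steinitz Lemma---one that gives partial sums within $\Oh(m \cdot \Delta(A))$ of the interpolation line in $\ell_\infty$---and absorbing its hidden constants into the asymptotic bound; the dependence on $||b||_\infty$ enters only through the length of the initial window $J$, which is what produces the multiplicative factor $(||b||_\infty + 1)$ rather than an additive one. The degenerate case $b = 0$ is handled separately: then $x = 0$ is feasible with $||x||_1 = 0$, well within the claimed bound, so the argument above may safely assume $||b||_\infty \ge 1$.
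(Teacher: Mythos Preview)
The paper does not supply its own proof of this proposition; it is cited from Jansen and Rohwedder and used as a black box in the proof of \Cref{thm:ilp-wk1}. There is therefore no in-paper argument to compare against.

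Your argument via the Steinitz Lemma is correct and is in fact the standard route to this bound (essentially the Eisenbrand--Weismantel argument that the cited source also follows). One detail worth making explicit: the interpolation form of Steinitz you invoke, $||s_j - (j/N)\,b||_\infty \le c\, m\, \Delta(A)$, is normally obtained by applying the classical zero-sum Steinitz Lemma to the shifted vectors $w_i = v_i - b/N$, which yields the deviation bound $m\bigl(\Delta(A) + ||b||_\infty/N\bigr)$ rather than $c\,m\,\Delta(A)$ directly. Under your contradiction hypothesis $N$ already exceeds $||b||_\infty$, so the correction term $||b||_\infty/N$ is below $1$ and is absorbed into the constant---but it is worth saying this, since otherwise the choice of $c$ appears to depend on the very $N$ you are bounding. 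With that one line added, the pigeonhole on the initial window $J$ and the extraction of the kernel vector $y$ go through exactly as you wrote.
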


At the expense of solving an LP relaxation, we can assume that the entries in the vector $b$ are bounded.
The following proposition is a corollary from the proximity bound by Eisenbrand and Weismantel~\cite{EisenbrandW20}.

\begin{proposition}[{\cite[\S 4.2]{JansenR23}}]
\label{prop:ilp:proximity}
    There is a polynomial-time algorithm that, given $A \in \zz^{m \times n}$ and $b \in \zz^m$, returns a vector $b' \in \zz^m$ such that:
    \begin{enumerate}
        \item  $||b'||_\infty = \Oh(m\cdot \Delta(A))^{m+1}$,
        \item $(Ax = b,\, x \in \zz^n_{\ge 0})$ is feasible if and only if $(Ax = b',\, x \in \zz^n_{\ge 0})$ is feasible.
    \end{enumerate}
\end{proposition}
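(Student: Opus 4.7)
The plan is to combine a polynomial-time LP computation with the Eisenbrand--Weismantel $\ell_1$-proximity bound $B := m(2m\Delta(A)+1)^m$. First I solve the LP relaxation $(Ax = b,\ x\ge 0,\ x\in\rr^n)$. If it turns out to be infeasible then so is the IP, and by LP duality I can extract a rational Farkas certificate $y$ with $y^{\top}A \le 0$ and $y^{\top}b > 0$; clearing denominators and setting $b'_i := \mathrm{sign}(y_i)$ gives $\|b'\|_\infty \le 1$ with $y^{\top}b' = \|y\|_1 > 0$, so $(Ax = b',\ x\ge 0)$ is also Farkas-infeasible and the required equivalence holds vacuously.

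In the main case the LP is feasible and I compute a vertex solution $x^*$, which has support $S$ of size at most $m$. I then define an integer shift $c\in\zz^n_{\ge 0}$ coordinatewise by $c_i := \max\{0,\ \lfloor x^*_i\rfloor - B\}$ and output $b' := b - Ac$. Since $c$ is supported in $S$ and $b = Ax^*$, we have $b' = A(x^*-c)$, where each nonzero coordinate of $x^*-c$ lies in $[0, B+1)$, so $\|b'\|_\infty \le \Delta(A)\cdot m(B+1) = \Oh(m\,\Delta(A))^{m+1}$.

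Verifying the feasibility equivalence splits into two directions. The easy one, ``$b'$ feasible $\Rightarrow$ $b$ feasible,'' holds because $y+c\ge 0$ is integral with $A(y+c) = b$ whenever $y\ge 0$ is an integral solution of $Ax = b'$. For the converse I invoke the Eisenbrand--Weismantel proximity theorem~\cite{EisenbrandW20}: assuming some IP solution of $Ax = b$ exists, there is an integral $z\ge 0$ with $Az = b$ and $\|z - x^*\|_\infty \le \|z-x^*\|_1 \le B$. Then $y := z - c$ is an integral solution of $Ax = b'$, and it is non-negative: coordinates with $c_i = 0$ are trivial, while for $c_i = \lfloor x^*_i\rfloor - B$ the proximity bound yields $z_i\ge x^*_i - B$, hence $y_i \ge \{x^*_i\} \ge 0$, and integrality upgrades this to $y_i \ge 0$.

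The subtle point, and the main obstacle, is precisely this converse. The naive choice $c = \lfloor x^*\rfloor$ would give the much smaller bound $\|b'\|_\infty\le m\,\Delta(A)$, but a generic IP solution $z$ need not dominate $\lfloor x^*\rfloor$ coordinatewise, so $y = z - \lfloor x^*\rfloor$ may have negative entries. Shaving an additional $B$ off each coordinate of $\lfloor x^*\rfloor$ creates exactly the slack that proximity guarantees we can absorb, at the cost of inflating $\|b'\|_\infty$ by the factor $B$ and hence yielding the claimed $\Oh(m\,\Delta(A))^{m+1}$ bound.
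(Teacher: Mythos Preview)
Your proof is correct and follows precisely the approach the paper indicates: the proposition is cited from \cite{JansenR23} and described there as ``a corollary from the proximity bound by Eisenbrand and Weismantel,'' which is exactly the ingredient you invoke. The paper gives no further details, and your reconstruction---solve the LP, take a vertex $x^*$, shift by $c_i = \max\{0,\lfloor x^*_i\rfloor - B\}$, and use $\ell_1$-proximity to certify the feasibility equivalence---is the standard way to derive this statement; your handling of the LP-infeasible case via a Farkas certificate is a clean addition. One cosmetic remark: the phrase ``integrality upgrades this to $y_i\ge 0$'' is redundant, since $y_i \ge \{x^*_i\} \ge 0$ already suffices.
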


We are ready to pin down the  complexity status of {\sc ILP Feasibility}.

\thmILP*
\begin{proof}
    We first show that the latter variant is WK[1]-hard. It admits a trivial PPT to the first variant, so then it remains to show that the first one belongs to WK[1].

    We prove WK[1]-hardness by reducing from {\sc Set Cover} parameterized by the universe size.
    This problem is known to be WK[1]-complete~\cite{HermelinKSWW15}.
    For each set $F \in \fcal$ we create a variable $x_F \in \zz_{\ge 0}$.
    For each $u \in U$ we write a constraint $\sum_{F \ni u} x_F \ge 1$, enforcing that $u$ must be covered.
    Then we add a constraint  $\sum_{F \in \fcal} x_F \le \ell$.
    It is easy to see that any integral solution corresponds to a vertex cover of size at most $\ell$.
    This yields a polynomial parameter transformation: the number of constraints is $|U|+1$, $\Delta(A) = 1$, and $||b||_\infty \le |U|$.
    
    Now we show that the first problem belongs to WK[1].
    First we rewrite an ILP into the equality form (\Cref{lem:ilp:equality}).
    By \Cref{thm:characterization} we have to check that the problem is solvable in time $2^{\poly(k)}\cdot s^{\Oh(1)}$ and that it admits a witness verification protocol with complexity $\poly(k,\log s)$, where $s$ stands for the encoding size of the instance. 
    The running time from \Cref{prop:ilp:fpt} can be estimated as $2^{\Oh(k\log k)}\cdot s^{\Oh(1)}$ 
    where $k = m + \Delta(A)$ so it suffices to design a witness verification protocol ($\cal A,B$). 
    The algorithm $\cal A$ preprocesses the ILP using \Cref{prop:ilp:proximity} to ensure that $||b||_\infty = \Oh(m \cdot \Delta(A))^{m+1}$.
    
    We apply \Cref{lem:ilp:cara} for the objective vector $c=0$ so $\Delta$ equals $\Delta(A)$.
    We infer that if the ILP is feasible, then it admits a solution $x^*$ with at most $\Oh(k^2)$ non-zero variables.
    By applying \Cref{prop:ilp:proximity} to the ILP restricted to only these variables we can further assume that $||x^*||_1 \le \Oh(m\cdot \Delta(A))^{2m+1} = 2^{\Oh(k^2\log k)}$.
    Hence $x^*$ can be encoded using  $\Oh(k^2\log s)$ bits for the choice of the non-zero variables and $\Oh(k^4\log k)$ bits for their values.
    This gives an~encoding of the witness supplied to the algorithm~$\cal B$.
    Checking each constraint can be done in time polynomial with respect to the number of non-zero variables and the number of relevant bits, which is $\poly(k,\log s)$.
    This concludes the description of the witness verification protocol.
\end{proof}

By inspecting the reduction from {\sc Set Cover} to {\sc ILP Feasibility}, one can easily see that it also works with additional constraints $x_u \in \{0,1\}$ for every $u \in U$.
It follows that the variant with bounded variables is also WK[1]-hard, however its containment in WK[1] is not known and we find it unlikely (see \Cref{sec:conclusion}).

\begin{observation}\label{obs:ilp:binary}
    {\sc Binary ILP Feasibility} is {\sc WK[1]}-hard when parameterized by $m$, under the assumption that $\Delta(A)=1$ and the vector $b$ is given in unary.
\end{observation}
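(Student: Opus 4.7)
The plan is to inspect the reduction from \textsc{Set Cover} to \textsc{ILP Feasibility} presented in the proof of \Cref{thm:ilp-wk1} and verify that essentially the same construction yields a hardness result for the binary variant. Since \textsc{Set Cover} parameterized by $|U|$ is WK[1]-complete~\cite{HermelinKSWW15}, it suffices to give a polynomial parameter transformation from it to \textsc{Binary ILP Feasibility} with the number of constraints bounded by $\poly(|U|)$, entries in $\{-1, 0, 1\}$, and right-hand side given in unary.

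First I would recall the reduction: given $(U, \fcal, \ell)$, introduce a variable $x_F$ for each $F \in \fcal$, write a coverage constraint $\sum_{F \ni u} x_F \ge 1$ for each $u \in U$, and a budget constraint $\sum_{F \in \fcal} x_F \le \ell$. Without loss of generality we may assume $\ell \le |U|$ (otherwise the answer depends only on whether $\bigcup \fcal = U$), so all entries of $b$ are bounded by $|U|$ and can be written in unary. The matrix has $|U|+1$ rows and entries in $\{0, 1\}$, so $m = \poly(|U|)$ and $\Delta(A) = 1$.

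The key observation is that imposing the upper bounds $x_F \le 1$ for every $F \in \fcal$ does not change feasibility. Indeed, for any feasible nonnegative integer solution $x$, the truncation $x'_F := \min(x_F, 1)$ still satisfies the coverage inequalities (which only require the sums to be at least $1$) and can only decrease the left-hand side of the budget constraint. Hence the original ILP is feasible if and only if the resulting binary ILP is feasible, and the latter directly corresponds to a set cover $\fcal' = \{F : x_F = 1\}$ of size at most $\ell$. Since the upper bounds $x_F \le 1$ are not counted as constraints in the binary variant, the parameter $m$ remains $|U|+1$.

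There is no real obstacle to overcome here: the only thing to double-check is that the transformation between set covers and binary solutions is a bijection preserving the feasibility of both the coverage and budget constraints, which follows immediately from the monotonicity argument above. The reduction runs in polynomial time and satisfies $m + \log \Delta(A) = \Oh(|U|)$ with $b$ in unary, establishing WK[1]-hardness under the claimed parameterization.
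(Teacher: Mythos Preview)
Your proposal is correct and follows exactly the approach the paper takes: the paper simply remarks that the \textsc{Set Cover} reduction from the proof of \Cref{thm:ilp-wk1} still works once the variables are forced into $\{0,1\}$. Your truncation argument and the explicit ``$\ell \le |U|$'' assumption spell out details the paper leaves implicit, but the underlying reduction is identical.
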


\section{Fast Witness Verification}\label{sec:fast1}

We show containment in WK[1] for the problems listed in \Cref{thm:list} using the criterion from
\Cref{thm:characterization}.
The first condition will be always satisfied by summoning a known FPT algorithm and we will focus on designing witness verification protocols. 

\subsection{Cut problems}

In this section we treat {\sc $r$-Way Cut} and {\sc Vertex Multiway Cut}.
We will employ connectivity oracles to quickly verify that a solution candidate forms a cut.

\begin{definition}
    A {\em vertex (resp. edge) failure connectivity oracle} for a graph $G$ and integer $d$ supports two types of operations.
    An {\em update} operation is performed only once and it processes a set $D$ of vertices (resp. edges) from $G$ of size at most $d$.
    In a {\em query} operation, for given vertices $u,v \in G$ the oracle determines whether $u,v$ are connected in $G -D$.
\end{definition}

\begin{proposition}[\cite{LongS22}]
\label{prop:cut:oracle}
There is a deterministic vertex (resp. edge) failure connectivity oracle for undirected graphs with polynomial preprocessing time, update time $\Oh(d^2 \cdot \log^4 n\cdot \log^4 d)$, and worst-case query time $\Oh(d)$.
\end{proposition}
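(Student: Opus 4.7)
The statement is cited from \cite{LongS22}, so I sketch the general architecture I would follow rather than a complete implementation.

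The plan is to combine a spanning-forest decomposition with a hierarchical deterministic connectivity sparsifier, so that once the failure set $D$ is revealed, the relevant structure contracts into a small auxiliary graph on which connectivity is trivial to resolve. In preprocessing, I would compute a spanning forest $F$ of $G$ and store it inside a balanced binary tree $T$ using a heavy-path / Euler-tour layout. At every internal node of $T$, store a sparse connectivity certificate of size $\Oh(d \cdot \poly\log n)$ for the non-tree edges crossing the cut defined by that subtree; such certificates can be produced deterministically via iterated expander decompositions computed through cut-matching games.

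Upon receiving $D$, I would first isolate the at most $d$ failed tree edges in $D \cap E(F)$. Removing them carves $F$ into at most $d+1$ fragments that must be reconnected using non-tree edges disjoint from $D$. For each failed tree edge, I would walk up $T$ from its leaf and, at each level, sift through the stored certificate to extract non-tree edges whose endpoints avoid $D$, using each surviving edge to merge two fragments inside a union--find. Spending $\Oh(d \cdot \log^4 n \cdot \log^4 d)$ per failed tree edge yields the claimed update bound, where the $\log^4 n$ factor is shared between the depth of $T$ and the certificate lookup, and the $\log^4 d$ reflects the internal organization of each certificate. A query for $(u,v)$ then locates the fragment containing each endpoint by scanning the at most $d$ failed ancestor edges along its heavy path in $F$ in time $\Oh(d)$, and tests whether the two fragments coincide in the union--find.

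For the vertex-failure variant I would use the standard reduction that replaces each vertex $v$ by a gadget with a central copy $v^{\circ}$, so that failing $v$ corresponds to failing the gadget's edges incident to $v^{\circ}$; this preserves the asymptotic bounds up to constants. The hardest step is the deterministic construction of the sparse certificate hierarchy within the stated polylogarithmic factors: a naive derandomization of Karger-style sparsification loses additional $\log$ factors, and securing the tight $\log^4 n \cdot \log^4 d$ dependency requires combining cut-matching-game expander decompositions with careful sparsification and pruning, which is the technical core of \cite{LongS22}.
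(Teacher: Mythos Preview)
The paper does not attempt to re-prove the result of \cite{LongS22}; it simply cites the vertex-failure oracle as a black box and adds one sentence deriving the edge-failure version: subdivide every edge of $G$ and simulate the deletion of an edge by deleting its subdivision vertex. So your architectural sketch, while a plausible roadmap toward \cite{LongS22}, is orthogonal to what the paper actually contains.

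More importantly, you have the reduction between the two variants reversed, and the direction you chose does not work as stated. The oracle of \cite{LongS22} handles \emph{vertex} failures directly; edge failures are the easy corollary via subdivision (each failed edge becomes exactly one failed subdivision vertex, so the parameter $d$ is preserved). Your proposal goes the other way: you build an edge-failure oracle first and then simulate failing a vertex $v$ by failing ``the gadget's edges incident to $v^{\circ}$''. In an undirected graph that is $\deg(v)$ edges, not a constant, so failing $d$ vertices may force the edge-failure oracle to absorb far more than $d$ edge deletions, and the claimed $\Oh(d^2\cdot \log^4 n\cdot \log^4 d)$ update time and $\Oh(d)$ query time no longer follow. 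There is no single-edge undirected gadget whose removal is equivalent to removing a vertex of arbitrary degree; this is exactly why vertex-failure oracles are regarded as the harder primitive and why the paper's reduction runs in the opposite direction.
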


The statement in \cite{LongS22} concerns vertex failure connectivity only but this setting generalizes edge failure connectivity.
To see this, subdivide each edge in $G$ and mimic removal of an edge by removing the corresponding vertex (see also \cite{Holm2001}).

\defparproblem{ $r$-Way Cut}
{undirected graph $G$, integers $r$ and $k$}
{is there a set $X \sub E(G)$ such that $G-X$ has at least $r$ connected components and $|X|\le k$?}
{$k$}

The vertex-deletion variant of the problem is
W[1]-hard already under the parameterization by $k+r$~\cite{MARX2006394}.
Edge deletion is easier and the above stated variant is FPT~\cite{cygan2020randomized}.
However, it does not admit a polynomial kernel unless \conpunless~\cite{Cygan14}.
We will show that {\sc $r$-Way Cut} belongs to WK[1] which, in the light of \Cref{thm:ilp-wk1}, makes the problem amenable to ILP-based preprocessing.

\begin{lemma}\label{lem:rcut:wk1}
    {\sc $r$-Way Cut $\in$ WK[1].}
\end{lemma}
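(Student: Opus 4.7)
The plan is to invoke Theorem~\ref{thm:characterization}: it suffices to exhibit an FPT algorithm of running time $2^{\poly(k)}\cdot n^{\Oh(1)}$ and a witness verification protocol of complexity $\poly(k,\log n)$. The former is provided by the known FPT algorithm for \textsc{$r$-Way Cut}~\cite{cygan2020randomized}, so I focus on the protocol. The witness will simply be the candidate cut $X \sub E(G)$, encoded in $\Oh(k\log n)$ bits.

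The key structural observation is that each connected component of $G - X$ belongs to one of two types: (a) a whole connected component of $G$ no vertex of which is an endpoint of an edge in $X$, and (b) a component of $G - X$ containing at least one endpoint of an edge in $X$. There are at most $|V(X)| \le 2k$ type-(b) components, while type-(a) components depend only on which original components of $G$ are ``touched'' by $X$. This dichotomy lets the verifier count the components of $G - X$ using only pairwise connectivity queries localized at the $\Oh(k)$ vertices of $V(X)$, together with precomputed information about $G$.

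The preprocessing algorithm $\cal A$ executes the polynomial-time preprocessing of the edge failure connectivity oracle from Proposition~\ref{prop:cut:oracle} with $d = k$, computes the number $c(G)$ of connected components of $G$ together with a component label for every vertex, and encodes all of this in its output string $s(I,k)$. Given $s(I,k)$ and the witness $X$, the verifier $\cal B$ (1) rejects if $|X| > k$; (2) issues the single allowed update with $D = X$; (3) reads the component labels of the at most $2k$ vertices of $V(X)$ and lets $\ell$ be the number of distinct labels, giving type-(a) count $c(G) - \ell$; (4) partitions $V(X)$ into equivalence classes by $G-X$-connectivity via $\binom{|V(X)|}{2} = \Oh(k^2)$ pairwise oracle queries, whose class count is the type-(b) count; (5) accepts iff the sum of these two counts is at least $r$.

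Correctness follows from the fact that types (a) and (b) partition the components of $G - X$ and each is counted exactly once. For the running time of $\cal B$: the oracle update costs $\Oh(k^2\log^4 n \cdot \log^4 k)$, each of the $\Oh(k^2)$ queries costs $\Oh(k)$, and label processing takes $\Oh(k\log n)$, all comfortably within $\poly(k,\log n)$. The only mildly subtle point -- the main obstacle in spirit -- is that connectivity oracles do not natively report the number of components; the ``far vs.\ near'' decomposition above circumvents this by shifting all counting work to the $\Oh(k)$ vertices incident to $X$, while the ``far'' part is handled by the precomputed component labelling.
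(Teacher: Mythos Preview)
Your proof is correct and uses essentially the same machinery as the paper's (the FPT algorithm citation, precomputed component labels, and the edge failure connectivity oracle of Proposition~\ref{prop:cut:oracle}), but with a cleaner witness design. The paper's witness consists of the cut $X$ \emph{together with} up to $2k$ auxiliary vertices $V_1,\dots,V_t$ serving as explicit representatives of the newly created components, and the verifier checks four conditions (i)--(iv) certifying that these representatives witness the required number of splits. Your witness is just $X$: you observe that every component of $G-X$ not already a component of $G$ must contain a vertex of $V(X)$, so the endpoints of $X$ themselves serve as representatives, and the verifier can directly \emph{compute} the component count of $G-X$ as $(c(G)-\ell)$ plus the number of $G{-}X$-connectivity classes on $V(X)$. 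This buys you a shorter witness and a simpler verifier (no condition (iv) to check), at the cost of the small extra structural observation about the type-(a)/type-(b) dichotomy. One minor omission: you should also verify that each pair listed in $X$ is actually an edge of $G$ (e.g., via a precomputed adjacency matrix, $\Oh(k)$ lookups of $\Oh(\log n)$ time each); otherwise the oracle's behaviour on non-edges is unspecified. This does not affect correctness or the $\poly(k,\log n)$ bound.
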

\begin{proof}
    By \Cref{thm:characterization} we need to check that the problem is solvable in time $2^{\poly(k)}n^{\Oh(1)}$ and that it admits a witness verification protocol with complexity $\poly(k,\log n)$.
    The first condition is satisfied due to known $k^{\Oh(k)}\cdot n^{\Oh(1)}$-time algorithm~\cite{cygan2020randomized}.

    Observe that removal of one edge can increase the number of connected component by at most one.
    Let $\ccal(G)$ denote the set of connected components of $G$ and $\ell = |\ccal(G)|$.
    If $\ell + k < r$ then clearly there can be no solution, so let us assume $r - \ell \le k$.
    Suppose that $X \sub E(G)$ is a solution.
    For every $C \in \ccal(G)$ that is being split into $r_C > 1$ parts, there exists vertices $v_1,v_2,\dots, v_{r_C} \in V(C)$ that are pairwise disconnected in $G-X$.
    The number of connected component of $G-X$ equals $\sum_{C \in \ccal(G)} r_C$.
    Since it suffices to split $r - \ell \le k$ components, we may assume that there at most $k$ components $C \in \ccal(G)$ for which $r_C > 1$.
    Therefore, when $X$ is a solution, there must exist $\le 2k$ vertices that can be divided into $t \le k$ groups $V_1,V_2,\dots,V_t$ so that (i) each group $V_i$ is pairwise connected in $G$, (ii) $v \in V_i$ and $u \in V_j$ are not connected in $G$ whenever $i \ne j$, (iii) distinct $u,v \in V_i$ are not connected in $G-X$, and (iv) $\sum_{i=1}^t (|V_i|-1) = r-\ell$.
    This constitutes a certificate that $X$ is a solution.
    
    We now design the witness verification protocol ($\cal A, B$).
    The algorithm $\cal A$ computes the connected components of $G$, assigns vertices in each component a unique label, and
    initializes the data structure from \Cref{prop:cut:oracle}.
    The algorithm $\cal B$ inherits the state of the data structure together with labels marking the connected components and is provided the following witness: a set $X$ of at most $k$ edges and $t \le k$ vertex sets  $V_1,V_2,\dots,V_t$ with $\le 2k$ vertices in total.
    First we check that all the given vertices are pairwise distinct, in time $\Oh(k^2)$.
    Then we check the conditions (i)-(iv).
    The conditions (i)-(ii) can be verified using labels in time $\Oh(k^2)$.
    To handle condition (iii), we first perform the update operation on the edge failure connectivity oracle for set $X$ in time $\Oh(k^3 \cdot \log^4 n)$.
    Then we query the oracle for each pair $u,v \in V_i$ for each $i \in [1,t]$ to check that they are disconnected in $G-X$.
    This takes total time $\Oh(k^3)$.
    Finally, condition (iv) boils down to number additions, which are performed in time $\Oh(k)$.
    This concludes the description of the protocol. 
\end{proof}

We move on to another cut-based problem. 

\defparproblem{Vertex Multiway Cut}
{undirected graph $G$, set $T \sub V(G)$, integer $k$}
{is there a set $X \sub V(G) \sm T$ such that each connected component of $G-X$ has at most one vertex from $T$ and $|X|\le k$?}
{$k$}

The existence of a polynomial kernel for \textsc{Vertex Multiway Cut} remains a major open problem~\cite{KratschW20}.
Positive results include special cases where $|T| = \Oh(1)$~\cite{KratschW20} or where the input graph is planar~\cite{JansenPL21}, and a quasi-polynomial kernel for the edge-deletion variant~\cite{Wahlstrom22}.
In addition, there exist reduction rules that compress the terminal set.

\begin{proposition}[{\cite[Lemma 2.7]{CyganPPW13}}]
\label{prop:multi:reduce}
There is a polynomial-time algorithm that, given an instance $(G,T,k)$ of 
{\sc Vertex Multiway Cut}, outputs an equivalent instance $(G',T',k')$ with $k' \le k$ and $|T'| \le 2k'$.
\end{proposition}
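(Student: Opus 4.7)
The plan is to establish the proposition via an LP-based preprocessing. Consider the natural LP relaxation for \textsc{Vertex Multiway Cut}: assign $x_v \in [0,1]$ to each non-terminal $v$, set $x_t=0$ for $t \in T$, impose $\sum_{v \in P \setminus T} x_v \ge 1$ for every path $P$ between two distinct terminals, and minimize $\sum_v x_v$. Its value lower-bounds the integer optimum, and, crucially, the LP admits a \emph{half-integral} optimum $x^*$ with values in $\{0,\tfrac12,1\}$, computable in polynomial time (Guillemot, extending Garg--Vazirani--Yannakakis for edge cuts). A companion persistence property holds: every vertex with $x^*_v=1$ lies in some optimal integer multiway cut, and every vertex with $x^*_v=0$ lies outside some optimal integer multiway cut.

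Given the input $(G,T,k)$, first compute such an $x^*$. If $\sum_v x^*_v > k$, output a trivial no-instance. Otherwise apply three reductions guided by $x^*$: (i) for each non-terminal $v$ with $x^*_v=1$, delete $v$ from $G$ and decrement $k$ (correct by persistence); (ii) for each non-terminal $v$ with $x^*_v=0$, contract $v$ into the terminal side it ``belongs to'' in the half-integral LP solution (again correct by persistence, since $v$ is excluded from some optimum); (iii) remove from $T$ every terminal whose connected component in the current graph contains no other terminal, as such terminals are trivially separated. Let $(G',T',k')$ denote the resulting instance. Equivalence with $(G,T,k)$ follows from persistence together with the easy check in (iii).

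The final task is to bound $|T'|$. After the reductions, every surviving non-terminal carries $x^*_v=\tfrac12$, every surviving terminal is active (connected in $G'$ to another terminal), and the LP value on $(G',T',k')$ is $\tfrac12|V^{1/2}| \le k'$. The plan is to argue that, in this cleaned-up instance,
\[
\sum_v x^*_v \;\ge\; \tfrac{1}{2}\,|T'|,
\]
by exhibiting a half-integral dual witness: for each active terminal $t$, use its $\tfrac12$-valued neighbourhood, together with the constraints that $x^*$ is LP-optimal and that no further persistence reductions apply, to route $\tfrac12$ units of fractional flow from $t$ to some other terminal in a capacity-respecting fashion. Taking the union of these flows gives a half-integral packing of terminal-to-terminal paths in which each terminal is saturated to weight at least $\tfrac12$; by LP duality restricted to the half-integral level, its value lower-bounds $\sum_v x^*_v$. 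Combining this with $\sum_v x^*_v \le k'$ yields $|T'| \le 2k'$.

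The main obstacle is precisely the final bound $\sum_v x^*_v \ge \tfrac12 |T'|$ on the reduced instance. In general multicommodity settings max-flow and min-cut differ, so one cannot simply invoke LP duality. The argument must exploit the specific structure left behind by the half-integral persistence reductions: namely that removing any active terminal's ``private'' $\tfrac12$-vertices would strictly decrease the LP objective, forcing the existence of the claimed half-integral saturating flow. Establishing this formally, as opposed to invoking half-integrality and persistence as black boxes, is the step where genuine combinatorial work is required.
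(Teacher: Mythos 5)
First, a framing point: the paper does not prove this proposition at all --- it is imported verbatim from [CyganPPW13, Lemma~2.7] --- so the relevant comparison is with the argument in that source, which is indeed the LP-based route you outline: take an (extremal) half-integral optimum of the path-covering relaxation, delete the weight-$1$ vertices and decrement $k$, bypass/contract the weight-$0$ vertices, and discard terminals separated from the rest of $T$. Your reconstruction of this phase is essentially right, with one point you should make explicit: the per-vertex persistence statement (``each $v$ with $x^*_v=0$ lies outside \emph{some} optimum'') does not by itself license contracting all such vertices simultaneously; one needs the simultaneous version --- a single minimum multiway cut disjoint from every zero-weight reachability set $B_t$ (the vertices reachable from $t$ through weight-$0$ vertices) --- which holds for a suitably chosen extremal half-integral optimum and is what the cited lemmas actually provide.

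The genuine gap is the final bound $|T'|\le 2k'$, and the flow-packing route you sketch is both unnecessary and quantitatively insufficient. A fractional terminal-to-terminal path packing in which each terminal is saturated to weight $\tfrac12$ has total value only about $\tfrac14|T'|$ (each path has two terminal endpoints), so weak duality would yield $|T'|\le 4k'$; to reach $2k'$ you would need saturation $1$ per terminal, which is exactly the multicommodity statement you correctly flag as problematic. The intended argument is elementary and avoids flows entirely. In the reduced instance no two terminals are adjacent (otherwise it is a no-instance), every neighbour of a terminal has $x^*_v>0$ (its weight-$0$ neighbours lie in $B_t$ and were contracted into $t$), hence $x^*_v=\tfrac12$ (weight-$1$ vertices were deleted), and every surviving terminal has at least one neighbour. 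Moreover, a vertex $v$ with $x^*_v=\tfrac12$ cannot be adjacent to two distinct terminals $t_1,t_2$ of $G'$, since unwinding the contractions gives a $t_1$--$t_2$ path in $G$ whose internal LP-weight is $\tfrac12<1$, contradicting feasibility of $x^*$. Thus the terminal neighbourhoods are nonempty, pairwise disjoint sets of half-weight vertices, so $|T'|\le |V_{1/2}| = 2\,\mathrm{LP}(G',T')\le 2k'$. This disjointness observation is the missing idea; once you have it, no dual witness construction is required.
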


Our witness verification protocol combines this procedure with the vertex failure connectivity oracle.

\begin{lemma}
\label{lem:multi:wk1}
    {\sc Vertex Multiway Cut $\in$ WK[1].}
\end{lemma}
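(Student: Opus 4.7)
The plan is to verify the two conditions of \Cref{thm:characterization}. The FPT condition is immediate from the $2^k \polyrun$ algorithm of Cygan, Pilipczuk, Pilipczuk, and Wojtaszczyk~\cite{CyganPPW13}, so the substantive task is to build a witness verification protocol of complexity $\poly(k, \log n)$. The key observation is that, unlike in \Cref{lem:rcut:wk1} where we had to certify splits of many components, in {\sc Vertex Multiway Cut} a cut is entirely characterized by its effect on the terminal set $T$: a set $X$ is a valid multiway cut if and only if every pair of terminals lies in distinct components of $G-X$. The obstacle is that $|T|$ could be $\Omega(n)$, in which case even enumerating the terminal pairs exceeds $\poly(k, \log n)$. \Cref{prop:multi:reduce} is exactly what lets us bypass this.

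The algorithm $\cal A$ first invokes \Cref{prop:multi:reduce} to produce an equivalent instance $(G', T', k')$ with $k' \le k$ and $|T'| \le 2k'$. It then initializes the vertex failure connectivity oracle of \Cref{prop:cut:oracle} on $G'$ with failure budget $d = k'$. The output $s(I,k)$ encodes the preprocessed oracle state together with $T'$ and $k'$. The required witness is a set $X \sub V(G') \sm T'$ with $|X| \le k'$, which can be specified by $\Oh(k \log n)$ bits.

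The algorithm $\cal B$ parses the witness, verifies $|X| \le k'$ and $X \cap T' = \emptyset$ in time $\Oh(k^2)$, performs a single update operation with $X$ on the oracle at cost $\Oh(k^2 \cdot \log^4 n \cdot \log^4 k)$, and then for each of the $\Oh(k^2)$ pairs $t_1, t_2 \in T'$ queries whether $t_1, t_2$ are connected in $G' - X$. Each query costs $\Oh(k)$, so the total verification time is $\poly(k, \log n)$. $\cal B$ accepts iff every query reports ``disconnected''.

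Correctness is immediate: if $(G,T,k)$ is a yes-instance, then so is $(G', T', k')$, and any optimal cut of size $\le k'$ can be presented as the witness; conversely, if $\cal B$ accepts some $X$, then every pair of terminals of $T'$ is separated in $G'-X$, so each component of $G' - X$ contains at most one terminal, making $X$ a valid multiway cut. The main subtlety, as mentioned, is handling the potentially huge original terminal set, which is resolved by using \Cref{prop:multi:reduce} inside $\cal A$ before the oracle is built. No issue arises from the interaction with the dynamic data structure because only one update is ever performed and the query cost is linear in $d = k'$.
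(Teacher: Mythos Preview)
Your proof is correct and follows essentially the same approach as the paper: reduce the terminal set via \Cref{prop:multi:reduce}, build the vertex failure connectivity oracle from \Cref{prop:cut:oracle}, and verify a candidate cut $X$ by checking $X \cap T' = \emptyset$ and querying disconnection for all $\Oh(k^2)$ terminal pairs. Your exposition is slightly more explicit about the witness encoding size and the necessity of shrinking $T$ first, but the argument is the same.
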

\begin{proof}
    {\sc Vertex Multiway Cut} can be solved in time $2^k \polyrun$~\cite{CyganPPW13} and it remains to construct a witness verification protocol.
    The algorithm $\cal A$ executes the reduction from \Cref{prop:multi:reduce} and so we can assume $|T| \le 2k$.
    Then it initializes the vertex failure connectivity oracle from \Cref{prop:cut:oracle}.
    The algorithm $\cal B$ is given the reduced instance, the state of the oracle, and the witness as simply a solution candidate $X \sub V(G)$, $|X| \le k$.
    First it checks if $X \cap T = \emptyset$ in time $\Oh(k^2)$.
    Then it performs the update operation on the oracle for set $X$, in time $\Oh(k^3 \cdot \log^4 n)$.
    Finally, it asks oracle for each pair $u, v \in T$ whether $u,v$ are disconnected in $G-X$, in total time $\Oh(k^3)$.
    Hence the complexity of the protocol is $\poly(k,\log n)$.
\end{proof}

\subsection{Minimum Common String Partition}

For two strings $x,y$ of equal length, their {\em common string partition} of size $k$ is given by partitions $x = x_1x_2\dots x_k$, $y = y_1y_2\dots y_k$, such that there exists a permutation $f \colon [k] \to [k]$ satisfying $x_i = y_{f(i)}$ for each $i \in [1,k]$.

\defparproblem{Minimum Common String Partition}
{strings $x, y$ of equal length, integer $k$}
{is there a common string partition of $x,y$ of size at most $k$?}
{$k$}

The study of this problem is motivated by applications in comparative
genomics~\cite{BulteauK14}. 
To the best of our knowledge, nothing is known about its kernelization status.
In order to design a witness verification protocol, we will employ the following data structure for maintaining a dynamic collection of strings.

\begin{proposition}[\cite{MehlhornSU97}]
\label{prop:mcsp:dynamic}
    There is a deterministic data structure that supports the following operations in worst-case time 
    $\Oh(\log^2 m\cdot\log^* m)$
    where $m$ is the total number of operations.
    \begin{enumerate}
        \item Create a singleton string with symbol $z$.
        \item Create a new string by concatenating two strings from the collection. 
        \item Create two new strings by splitting a string from the collection at the given index. 
        \item Check if two strings in the collection are equal.
    \end{enumerate}
    Moreover, operations $(2,3)$ keep the host strings in the collection.
\end{proposition}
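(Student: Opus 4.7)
The plan is to implement the collection via a persistent balanced binary tree representation of each string, on top of which I would maintain a hierarchical \emph{signature} that reduces equality testing to a constant number of dictionary lookups. The leaves of each tree store individual symbols and internal nodes store the length of their subtree; a singleton is a single leaf, while concatenation and splitting are the standard \texttt{join} and \texttt{split} primitives on balanced trees (e.g.\ weight-balanced or finger trees). Using path copying (or an explicit versioning scheme) ensures that the host strings remain in the collection after operations $(2)$ and $(3)$, at a cost of $\Oh(\log m)$ tree manipulations per operation.

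For equality testing, I would attach to each string a signature defined by a \emph{locally consistent parsing}. Using a deterministic rule that depends only on a constant-size window of symbols, partition the current string into blocks of length $2$ or $3$. Each distinct block is assigned a canonical name via a global dictionary (allocating a fresh symbol on first occurrence), producing a compressed string that is a constant factor shorter. Recursing yields $\Oh(\log m)$ levels, after which a single top-level symbol remains; this symbol is the signature. Two strings are equal if and only if their top-level signatures coincide, so a query reduces to one comparison per level down the hierarchy.

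The main obstacle is bounding the cost of signature maintenance under concatenation and splitting, since in principle a single edit could reshape the decomposition at every level. The technical heart of the argument is to show that the locally consistent rule confines the effect of an edit to an $\Oh(1)$-sized window at each of the $\Oh(\log m)$ levels: outside this window the block boundaries computed before and after the edit must agree, because they are determined by local neighborhoods, so the vast majority of already-named blocks can be reused. I would prove this by induction on the recursion depth, tracking how a discrepancy at one level propagates upward by at most an additive constant. Hence a single operation triggers $\Oh(\log m)$ block changes in total, each requiring a lookup or insertion in the global dictionary and $\Oh(\log m)$ work to locate the affected position in the balanced tree. Implementing the dictionary with a deterministic structure supporting lookup and insertion in $\Oh(\log^* m)$ worst-case time (for instance, via perfect hashing with periodic deterministic rebuilding, or atomic heaps over the universe of allocated symbols) yields the claimed $\Oh(\log^2 m \cdot \log^* m)$ worst-case bound per operation, including equality queries.
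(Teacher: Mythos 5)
The paper does not prove this proposition at all -- it is quoted verbatim from Mehlhorn, Sundar, and Uhrig \cite{MehlhornSU97} -- so the only meaningful comparison is whether your reconstruction would actually establish the stated bound. Your overall architecture (a hierarchy of locally consistent parsings with canonical block names drawn from a global dictionary, equality reduced to comparing top-level signatures, $\Oh(\log m)$ levels because each level shrinks the string by a constant factor) is indeed the approach of the cited work.

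The genuine gap is in your central technical claim that a \emph{deterministic} rule depending on a constant-size window partitions the string into blocks of length $2$ or $3$, and that consequently an edit perturbs only an $\Oh(1)$-sized region per level. No such deterministic constant-window rule exists over an unbounded alphabet: the block names generated at lower levels form a growing alphabet, and a deterministic local rule with an $\Oh(1)$ window cannot break symmetry on, say, a long monotone run of distinct symbols while guaranteeing constant block length. The known deterministic technique is deterministic coin tossing (Cole--Vishkin style alphabet reduction), in which each block boundary depends on a neighborhood of size $\Theta(\log^* m)$; hence an edit invalidates $\Oh(\log^* m)$ block names per level, not $\Oh(1)$. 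This is precisely where the $\log^* m$ factor in the stated bound comes from, whereas you attribute it to the dictionary. Your proposed dictionary with \emph{deterministic worst-case} $\Oh(\log^* m)$ lookup and insertion (via ``perfect hashing with periodic rebuilding'' or atomic heaps) is also not a structure you can simply invoke; in the actual argument the dictionary is an ordinary balanced search tree costing $\Oh(\log m)$ per access, and the accounting $\Oh(\log m)$ levels $\times$ $\Oh(\log^* m)$ changed names per level $\times$ $\Oh(\log m)$ per dictionary access yields the claimed $\Oh(\log^2 m \cdot \log^* m)$. As written, your induction ``tracking how a discrepancy propagates upward by at most an additive constant'' would fail at the base case, because the premise of an $\Oh(1)$ window of influence is false for the deterministic setting.
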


\begin{lemma}\label{prop:mcsp:wk1}
    {\sc Minimum Common String Partition $\in$ WK[1]}.
\end{lemma}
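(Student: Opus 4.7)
The plan is to invoke the characterization from Theorem~\ref{thm:characterization}: we need an FPT algorithm with running time $2^{\poly(k)}\cdot n^{\Oh(1)}$ together with a witness verification protocol of complexity $\poly(k,\log n)$. Condition~(1) is granted by the known $2^{\Oh(k^2 \log k)}\cdot n^{\Oh(1)}$-time algorithm of Bulteau and Komusiewicz, so the remainder of the argument is devoted to building the protocol.

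The natural witness for a yes-instance is the pair of partitions, encoded as $k-1$ cut positions in $x$ and $k-1$ cut positions in $y$, together with the matching permutation $f\colon[k]\to[k]$. This occupies $\Oh(k\log n)$ bits, which fits into the budget. The preprocessing algorithm $\cal A$ initializes the dynamic string collection of Proposition~\ref{prop:mcsp:dynamic} and inserts $x$ and $y$ using $\Oh(n)$ singleton-creation and concatenation operations in polynomial time; it then passes the state of the collection, together with $(x,y,k)$, to the verifier.

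The verifier $\cal B$ first validates the syntactic part of the witness (monotonicity of cut positions, the equalities $|x_i|=|y_{f(i)}|$, and the fact that $f$ is a permutation) in time $\Oh(k \log n)$. It then issues $2(k-1)$ split operations on the host strings $x$ and $y$ at the prescribed positions, which by the final clause of Proposition~\ref{prop:mcsp:dynamic} preserves $x$ and $y$ in the collection while also making all pieces $x_1,\dots,x_k,y_1,\dots,y_k$ simultaneously available. Finally, $\cal B$ performs $k$ equality queries, one for each pair $(x_i,y_{f(i)})$, and accepts iff all of them succeed. Since the cumulative number of data structure operations is $\Oh(n+k)$, each call runs in time $\Oh(\log^2 n \cdot \log^* n)$, so $\cal B$ terminates in time $\Oh(k\log^2 n\cdot \log^* n)$, comfortably within $\poly(k,\log n)$.

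The main point of care is that splitting a string must not destroy the ability to query substrings of its original copy; this is exactly the property granted by operations $(2,3)$ keeping host strings alive, which is the key reason for choosing the data structure of \cite{MehlhornSU97} over a destructive alternative. With the protocol in place, Theorem~\ref{thm:characterization} yields containment of {\sc Minimum Common String Partition} in WK[1].
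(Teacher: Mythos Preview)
Your proof is correct and follows essentially the same route as the paper: invoke the FPT algorithm of Bulteau and Komusiewicz for condition~(1), and for condition~(2) use the dynamic string collection of Mehlhorn, Sundar, and Uhrig so that the verifier can process a guessed partition with $\Oh(k)$ data-structure calls of cost $\Oh(\log^2 n\cdot\log^* n)$ each. The only (cosmetic) difference is that the paper guesses the partition of $x$ alone, reassembles the pieces in the order dictated by $f$ via concatenations, and performs a single equality test against $y$, whereas you guess both partitions and perform $k$ piecewise equality tests; both variants stay comfortably within the same $\poly(k,\log n)$ budget, and your emphasis on host-string preservation, while not wrong, is not actually needed for either variant to go through.
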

\begin{proof}
    The problem can be solved in time \run{k^2\log k}~\cite{BulteauK14} and so it suffices to give a witness verification protocol of complexity $\poly(k,\log n)$.
    The algorithm $\cal A$
    builds strings $x,y$ using operations $(1,3)$ of data structure from \Cref{prop:mcsp:dynamic}, then it outputs the computed data structure.

    A witness is a partition $x = x_1x_2\dots x_k$, specified by $k$ indices from the range $[1,n]$, together with a permutation $f \colon [k] \to [k]$.
    The algorithm $\cal B$ inherits the state of the data structure from $\cal A$ and creates strings $x_1,x_2,\dots,x_k$ using the split operation.
    Then it concatenates them in the order given by $f$ and checks if the resulting string is equal to $y$.

    The witness can be encoded using $k\log n + k\log k$ bits.
    The verification process requires $2k+1$ operations on the data structure, each taking time  $\Oh(\log^2 n\cdot\log^* n)$.
\end{proof}

\subsection{Long Path}

A {\em feedback vertex set} (FVS) in a graph $G$ is a set $X \sub V(G)$ for which $G-X$ is acyclic. The FVS number of $G$ is the size of its smallest feedback vertex set.

\defparproblem{Long Path / FVS}
{undirected graph $G$, integer $\ell$}
{is there a path on $\ell$ vertices in $G$?}
{$k=$ FVS number of $G$}

It is known that {\sc Long Path} admits a polynomial kernel when parameterized by the vertex cover number~\cite{bodlaender2013kernel} and in the same paper it was asked whether the same holds for the FVS number parameterization.
To the best of our knowledge, this question remains open since 2013.
An FPT algorithm for {\sc Long Path / FVS} is implied by the following result because treewidth is never larger that the FVS number.

\begin{proposition}[{\cite[Theorem 3.20]{BODLAENDER201586}}]
\label{prop:path:fpt}
    {\sc Long Path} can be solved in time \run{\tw} where $\tw$ denotes the treewidth of the input graph.
\end{proposition}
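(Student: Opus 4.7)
The plan is to prove this by dynamic programming on a tree decomposition of the input graph, using the rank-based approach of Bodlaender, Cygan, Kratsch, and Nederlof to achieve single-exponential dependence on $\tw$. First I would invoke a standard $2^{\Oh(\tw)} \cdot n^{\Oh(1)}$-time algorithm to compute a tree decomposition of width $\Oh(\tw)$ and normalize it into a \emph{nice} tree decomposition with \textsc{Introduce}, \textsc{Forget}, and \textsc{Join} nodes. The task then becomes to decide whether there is a simple path on $\ell$ vertices whose trace on each bag extends consistently across the decomposition.

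The starting DP state at a bag $X_t$ would record, for every candidate length so far in $[0,\ell]$, a subset $S \sub X_t$ of vertices already touched by path fragments, the degree in $\{0,1,2\}$ of each $v \in S$ inside its fragment, and a \emph{pairing} of endpoints specifying which two endpoints of $S$ currently belong to the same fragment. Correctness of the transitions at each bag type is routine, but summing over all pairings already produces $\tw^{\Oh(\tw)}$ states per bag, which gives only the classical slightly super-exponential running time and is insufficient for the target bound.

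To cut this down to $2^{\Oh(\tw)}$, my plan is to view, for each fixed subset $S$ and degree assignment on $S$, the family of feasible pairings as a row set of a $0/1$ matrix over $\mathrm{GF}(2)$ whose columns are indexed by perfect matchings of the endpoints of $S$, and to keep only a \emph{representative subset} having the same row span. Such a subset has size at most $2^{|S|}$ and can be extracted in polynomial time by Gaussian elimination, and one must verify that representativeness is preserved by the \textsc{Introduce}, \textsc{Forget}, and \textsc{Join} updates. Combined with the additional $\Oh(\ell) \le \Oh(n)$ index tracking the current path length, this keeps every DP table within $2^{\Oh(\tw)} \cdot n^{\Oh(1)}$ throughout the computation.

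I expect the main obstacle to be the \textsc{Join} node: combining two independent families of pairings on the shared bag requires gluing the matchings, discarding any glued matching that would close a cycle inside the joined fragment graph, and then re-extracting a representative subset of the (potentially quadratically larger) resulting family. Proving that representativeness is compatible with this cut-and-join operation is the non-trivial heart of the argument; once it is in place, correctness at the root of the decomposition follows from the standard nice-tree-decomposition template, and the overall running time is $2^{\Oh(\tw)} \cdot n^{\Oh(1)}$ as claimed.
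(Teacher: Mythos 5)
This proposition is quoted from Bodlaender, Cygan, Kratsch, and Nederlof, and the paper gives no proof of its own beyond the citation; your sketch reconstructs exactly the rank-based dynamic programming argument of that reference (nice tree decomposition, states given by degrees and endpoint pairings, and representative sets of pairings extracted by Gaussian elimination over $\mathrm{GF}(2)$ of the cut-and-join matrix). The sketch is correct and matches the cited proof, so there is nothing to add.
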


We remark that this result has been stated under the assumption that a tree decomposition of width $\tw$ is provided in the input but this assumption can be dropped due to known approximation algorithms for treewidth \cite[\S 5]{BODLAENDER201586}.

For a tree $T$ and $u,v \in V(T)$ let $P_T(u,v)$ denote the unique $(u,v)$-path in $T$.
The length of a path $P$ is defined as $|V(P)|$.

\begin{lemma}
\label{lem:path:wk1}
     The problem {\sc Long Path / FVS} belongs to {\sc WK[1]}.
\end{lemma}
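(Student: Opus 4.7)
The plan is to verify the two conditions of \Cref{thm:characterization}. Condition (1) follows immediately: a feedback vertex set of size $k$ yields a tree decomposition of width $k+1$, so \Cref{prop:path:fpt} provides the required $2^{\poly(k)}n^{\Oh(1)}$-time algorithm. For condition (2), I would design a witness verification protocol $(\cal A, B)$ of complexity $\poly(k,\log n)$.

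Algorithm $\cal A$ computes in polynomial time a feedback vertex set $X$ of size $\Oh(k)$ (using a known constant-factor approximation), roots each tree of the forest $F = G - X$ at an arbitrary vertex, and preprocesses $F$ to support LCA queries, vertex-depth lookups, and hence tree-distance queries in $\Oh(1)$ time. It also prepares a data structure for $G$-adjacency queries in $\Oh(\log n)$ time. A path $P$ on $\ell$ vertices in $G$ visits some (possibly empty) subsequence $x_1, \ldots, x_t$ of $X$ in the order along $P$, separated by maximal subpaths lying entirely in $F$; since each forest-subpath is uniquely determined by its two endpoints, the witness received by $\cal B$ consists of the sequence $x_1, \ldots, x_t$ together with the endpoints of the $\Oh(k)$ forest-subpaths, totalling $\Oh(k \log n)$ bits.

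Algorithm $\cal B$ then verifies: (i) the $x_i$ are pairwise distinct elements of $X$; (ii) the boundary vertex of each forest-subpath is adjacent in $G$ to the corresponding $x_i$, and whenever two consecutive FVS-vertices $x_i, x_{i+1}$ are not separated by any forest-subpath the edge $x_i x_{i+1}$ exists in $G$; (iii) the forest-subpaths are pairwise vertex-disjoint; (iv) the total length---namely $t$ plus the sum of the tree-distances of the forest-subpaths---equals $\ell$. Checks (i), (ii), and (iv) run in $\Oh(k^2 + k \log n)$ time using the precomputed structures. The main obstacle is (iii): for each of the $\Oh(k^2)$ pairs of forest-subpaths I need a constant-time disjointness test, which I would obtain from the following observation. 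Given two paths $P_T(a_1, b_1)$ and $P_T(a_2, b_2)$ within the same rooted tree of $F$, with $w_j$ the LCA of $a_j, b_j$ and, say, $w_2$ no shallower than $w_1$: if $w_1$ is not an ancestor of $w_2$, then the subtrees rooted at $w_1$ and $w_2$ are disjoint and so are the paths; otherwise, the paths intersect if and only if $w_2$ lies on the vertical segment $P_T(w_1, a_1) \cup P_T(w_1, b_1)$, which reduces to a constant number of ancestor tests implementable via LCA queries. Paths lying in different trees of $F$ are trivially disjoint. Summing the contributions, the entire verification runs in $\poly(k, \log n)$ time, as required.
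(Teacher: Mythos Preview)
Your proposal is correct and follows essentially the same strategy as the paper: verify the FPT condition via treewidth, have $\cal A$ compute an approximate FVS $X$ and preprocess the forest $G-X$, encode the witness as the alternating sequence of FVS-vertices and forest-subpath endpoints, and have $\cal B$ check adjacencies, total length, and pairwise disjointness of the $\Oh(k)$ tree-paths. The only difference is cosmetic: the paper has $\cal A$ precompute a brute-force table storing, for every quadruple $(v_1,v_2,u_1,u_2)$ in each tree, a disjointness bit (and likewise all pairwise path lengths), whereas you implement the same queries on the fly via LCA/ancestor tests---both are valid and yield the same $\poly(k,\log n)$ verification bound.
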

\begin{proof}
    The FPT algorithm follows from \Cref{prop:path:fpt} since $\tw(G) \le \mathsf{fvs}(G)$.
    We construct a witness verification protocol ($\cal A,B$).
    The algorithm $\cal A$ first runs 2-approximation for {\sc Feedback Vertex Set}~\cite{bafna19992} so we can assume that a feedback vertex set $X$ of size at most $2k$ is known.
    Next, it computes the connected components of $G-X$ by assigning vertices in each component a unique label.
    Then, on each tree $T$ we compute the following information.
    For each $v_1,v_2 \in V(T)$ we store the length 
    of the path $P_T(v_1,v_2)$.
    For each $v_1,v_2, u_1,u_2 \in V(T)$ we store a bit indicating whether $V(P_T(v_1,v_2)) \cap V(P_T(u_1,u_2)) = \emptyset$.
    The algorithm $\cal A$ outputs all these structures, together with the adjacency matrix of $G$.
    


    Observe that when $P$ is a path in $G$, then it has at most $2k+1$ maximal subpaths $P_1,P_2,\dots,P_t$ in $G-X$.
    The witness is a description of a path $P$ as a sequence of vertices from $X$ and pairs of vertices from $G-X$ that form the endpoints of subpaths $P_i$.
    To verify the witness we need to check (i) that the total length of path is $\ell$, (ii) that the consecutive vertices in the description are adjacent in $G$, and (iii) that the given subpaths are pairwise disjoint.
    To check condition (i) we read the length of each path $P_i$ specified by endpoints $v_1,v_2$.
    This quantity can be retrieved from the precomputed table and
    the total length is then computed using $\Oh(k)$ addition.
    Condition (ii) is tested with the adjacency matrix.
    To check condition (iii), we consider all pairs $i,j \in [1,t]$ and if the endpoints of $P_i,P_j$ belong to the same tree, we determine disjointedness of $P_i,P_j$ using the precomputed table.
    Then we also compare the listed vertices from $X$ to ensure that they are pairwise distinct.
    The witness can be encoded with $\Oh(k\log n)$ bits while the verification takes time $\Oh(k^2)$.
    This yields a witness verification protocol of complexity $\poly(k,\log n)$.
\end{proof}

\subsection{Steiner Tree}

\defparproblem{Steiner Tree / T}
{graph $G$, set $T \sub V(G)$, integer $\ell$}
{is there a tree $X \sub E(G)$ that spans $T$ and $|X| \le \ell$?}
{$k=|T|$}

Hermelin et al.~\cite{HermelinKSWW15} showed that {\sc Steiner Tree} is WK[1]-complete when parameterized by $\ell$.
Since $k \le \ell$ in any non-trivial instance, this implies that the problem is WK[1]-hard when parameterized by $k$ and, furthermore, that it does not admit a polynomial kernel unless \conpunless.

\begin{lemma}
\label{prop:steiner:wk1}
    The problem {\sc Steiner Tree / T} belongs to WK[1].
\end{lemma}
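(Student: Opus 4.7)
The plan is to apply the characterization of WK[1] from \Cref{thm:characterization}. The first condition, an FPT algorithm in time $2^{\poly(k)} \cdot n^{\Oh(1)}$, is immediately supplied by the classical Dreyfus--Wagner algorithm, which solves {\sc Steiner Tree} in time $3^k\cdot n^{\Oh(1)}$. The main task is therefore to design a witness verification protocol of complexity $\poly(k, \log n)$; the obstacle to overcome is that a Steiner tree may contain $\Theta(n)$ edges, so we cannot simply supply the edge set as a witness.

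The key structural fact I would exploit is that any minimum Steiner tree for $T$ has a short skeleton. We may assume all leaves of the tree lie in $T$, so by a handshake argument on trees it has at most $k - 2$ non-terminal branch points of degree $\ge 3$; denote this set by $S^*$. Contracting internal degree-$2$ paths yields a topology tree $\tau^*$ on $T \cup S^*$ of size at most $2k - 2$. Each edge of $\tau^*$ corresponds to a $u$-$v$ path in $G$, and since the Steiner tree is acyclic these paths are internally vertex-disjoint; hence the weight of the Steiner tree is at least $\sum_{(u,v) \in E(\tau^*)} d_G(u, v)$, where $d_G$ denotes shortest-path distance in $G$. Conversely, for any $S \sub V(G) \sm T$ with $|S| \le k - 2$ and any tree $\tau$ on $T \cup S$, the union of shortest $u$-$v$ paths over $(u,v) \in E(\tau)$ is a connected subgraph containing $T$ with at most $\sum d_G(u, v)$ edges, so extracting any spanning tree of this subgraph yields a Steiner tree of weight at most $\sum d_G(u, v)$. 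Combining both inequalities, the minimum Steiner tree weight equals $\min_{(S, \tau)} \sum_{(u,v) \in E(\tau)} d_G(u, v)$.

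Given this equivalence, the protocol is straightforward. The algorithm $\cal A$ precomputes in polynomial time a table of all-pairs shortest-path distances in $G$. The witness consists of a set $S \sub V(G) \sm T$ with $|S| \le k - 2$ (encoded in $\Oh(k \log n)$ bits) together with a tree topology $\tau$ on $T \cup S$ (encoded in $\Oh(k \log k)$ bits). The verification algorithm $\cal B$ looks up $\Oh(k)$ distances from the precomputed table, sums them in $\poly(k, \log n)$ time, and checks whether the sum is at most $\ell$. I anticipate the main conceptual obstacle to be recognizing that one need not verify any actual paths in $G$---which would entail checking that the guessed paths are pairwise vertex-disjoint, a potentially hard task---but only the sum of shortest-path distances along the guessed topology; the structural equivalence above ensures this suffices.
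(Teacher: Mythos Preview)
Your proposal is correct and essentially identical to the paper's own proof: both precompute all-pairs shortest-path distances, guess at most $k-2$ (the paper writes the looser bound $k-1$) non-terminal branch vertices together with a tree topology on $T \cup S$, and verify by summing the corresponding distances and comparing to~$\ell$. The two-direction correctness argument you give matches the paper's as well.
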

\begin{proof}
    {\sc Steiner Tree} can be  solved in time $3^k\polyrun$~\cite{Dreyfus71} and 
    we present a witness verification protocol.
    The algorithm $\cal A$ computes all the pairwise distances in $G$ and outputs them together with the set $T$.
    The witness supplied to algorithm $\cal B$ is given by a set $Y \sub V(G) \sm T$ of at most $k-1$ vertices and a tree $F$ on the vertex set $T \cup Y$.
    The verification protocol reads the distances in $G$ between each pair $u,v \in T \cup Y$ where $uv \in E(F)$ and checks if their sum is at most $\ell$.
    
    If the verification succeeds then clearly $G$ contains a connected subgraph spanning $T$ on $\le\ell$ edges, so it also contains a tree with this property.
    In the other direction, suppose that there exists a solution tree $X$.
    This tree has at most $k$ leaves so the number of vertices of degree greater than 2 is at most $k-1$.
    The witness guesses such vertices that are not in $T$ and the tree $F$ obtained from $X$ by contracting each vertex $V(X) \sm T$ of degree two to one of its neighbors.
    The total number of edges lying on paths corresponding to $E(F)$ is clearly $|X|$ which bounds the sum of distances read by the algorithm $\cal B$.
    This concludes the correctness proof of the protocol.
    The witness can be encoded using $\Oh(k\log n)$ bits while verification requires $\Oh(k)$ additions of numbers from $[1,n]$.
\end{proof}

\subsection{Optimal Discretization}

Consider sets $W_1,W_2 \sub \mathbb{Q}^2$.
A pair $(X, Y)$ of sets $X, Y \sub \mathbb{Q}$ is called a separation of $(W_1, W_2)$ if for every $(x_1, y_1) \in W_1$ and $(x_2, y_2) \in W_2$ there exists an element of $X$ strictly between $x_1$ and $x_2$ or an
element of $Y$ strictly between $y_1$ and $y_2$.
A geometric interpretation of $(X,Y)$ is a family of axis-parallel lines (vertical line $\ell_x$ for each $x \in X$ and horizontal line $\ell_y$ for each $y \in Y$) so that for every component $C$ of $\rr^2 \sm \bigcup_{z \in X\cup Y} \ell_z$ the closure of $C$ contains either only points from $W_1$ or only points from $W_2$.

\defparproblem{Optimal Discretization}
{sets $W_1,W_2 \sub \mathbb{Q}^2$, integer $k$}
{is there a separation $(X,Y)$ of $(W_1, W_2)$ with $|X| + |Y| \le k$?}
{k}

The geometric interpretation makes the problem attractive from the perspective of learning theory.
This motivated the study of parameterized algorithms for {\sc Optimal Discretization}: it turns out to be FPT~\cite{KratschMMPS21} but only when we consider axis-parallel separating lines~\cite{bonnet2018parameterized}.
To the best of our knowledge, nothing is known about its kernelization status.

For $F \sub \mathbb{Q}$ we define $h(W) = \{\frac{x_1 + x_2}2 \colon\, x_1,x_2 \in F\}$.
For $W \sub \mathbb{Q}^2$ we define sets $h_X(W), h_Y(W) \sub \mathbb{Q}$ as respectively $h(W_X), g(W_Y)$, where $W_X,W_Y$ are projections of $W$ into $X/Y$-coordinates.
It is easy to modify any separation $(X,Y)$ of $(W_1, W_2)$ to only use coordinates that lie exactly in between two coordinates from the input, without affecting the set of pairs that are separated.

\begin{observation}
\label{lem:disc:h}
    There exists a minimum-size separation $(X,Y)$ of $(W_1, W_2)$ satisfying  $X \sub h_X(W_1 \cup W_2),\, Y \sub h_Y(W_1 \cup W_2)$.
\end{observation}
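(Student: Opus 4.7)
The plan is to take any minimum-size separation $(X, Y)$ of $(W_1, W_2)$ and replace each coordinate element by a midpoint from $h_X(W_1 \cup W_2)$ or $h_Y(W_1 \cup W_2)$ whose separating effect is a superset of the original. Since validity of a separation only requires that each pair from $W_1 \times W_2$ be separated by at least one element, enlarging the set of pairs separated by any individual element preserves validity. Moreover, replacement can only merge set elements, never create new ones, so the total size cannot grow; any strict decrease would contradict minimality of $(X, Y)$, so we end up with a minimum-size separation of the desired form.

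Let $c_1 < c_2 < \dots < c_m$ denote the distinct first coordinates of points in $W_1 \cup W_2$. The key observation is that, for a value $x \in \mathbb{Q}$, the set of pairs $((a_1, b_1), (a_2, b_2)) \in W_1 \times W_2$ with $x$ strictly between $a_1$ and $a_2$ depends only on which open interval $(c_i, c_{i+1})$ contains $x$, or whether $x$ equals some $c_i$. For $x \in (c_i, c_{i+1})$ with $1 \le i < m$, I would replace $x$ by $(c_i + c_{i+1})/2 \in h_X(W_1 \cup W_2)$, which lies in the same open interval and separates exactly the same pairs. For $x = c_i$ with $1 < i < m$, I would replace $x$ by $(c_{i-1} + c_i)/2 \in h_X(W_1 \cup W_2)$; every pair separated by $c_i$ has first coordinates $a_1 \le c_{i-1}$ and $a_2 \ge c_{i+1}$ (up to swapping), and any such pair is a fortiori separated by any value in $(c_{i-1}, c_i)$, so nothing is lost.

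The remaining cases $x \le c_1$ and $x \ge c_m$ cannot arise in a minimum separation: such an $x$ separates no pair via the first coordinate, because the required strict inequality would force a point with first coordinate outside $[c_1, c_m]$ to exist, and therefore $x$ could be removed without affecting validity. Applying the analogous procedure to every $y \in Y$ with the second coordinates yields sets $X' \sub h_X(W_1 \cup W_2)$ and $Y' \sub h_Y(W_1 \cup W_2)$ for which $(X', Y')$ is still a separation of $(W_1, W_2)$, and minimality then forces $|X'| + |Y'| = |X| + |Y|$. The only mildly delicate step is verifying that shifting $x$ off a data coordinate $c_i$ into an adjacent open interval can only enlarge the set of separated pairs, but this follows directly from the strict inequality in the definition of separation.
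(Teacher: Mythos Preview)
Your proposal is correct and follows exactly the idea the paper sketches in one line before the observation: shift each coordinate to a midpoint of data coordinates without shrinking the set of separated pairs. One minor simplification you could make: since the definition of $h$ allows $x_1 = x_2$, every $c_i$ already lies in $h_X(W_1 \cup W_2)$, so the case $x = c_i$ needs no replacement at all.
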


\begin{lemma}\label{lem:disc:wk1}
    {\sc Optimal Discretization $\in$ WK[1].}
\end{lemma}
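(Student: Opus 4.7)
My plan is to invoke Theorem~\ref{thm:characterization}. The first condition, FPT solvability in time $2^{\poly(k)}\cdot n^{\Oh(1)}$, is supplied directly by the algorithm of Kratsch et al.~\cite{KratschMMPS21}, so it remains to design a witness verification protocol of complexity $\poly(k,\log n)$. The natural witness is a candidate separation $(X,Y)$ of size at most $k$, and by Observation~\ref{lem:disc:h} one may assume $X \sub h_X(W_1\cup W_2)$ and $Y \sub h_Y(W_1\cup W_2)$. As both sets have size $\Oh(n^2)$, every entry of the witness can be encoded as an $\Oh(\log n)$-bit index, so the whole witness fits into $\Oh(k\log n)$ bits.

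The main obstacle is that the definition of a separation is naturally checked pair by pair, requiring $|W_1|\cdot |W_2| = \Omega(n^2)$ inspections. To bypass it, I will use the fact that $X$ and $Y$ partition $\rr^2$ into at most $(k+1)^2$ closed axis-aligned cells of the form $[x_i, x_{i+1}] \times [y_j, y_{j+1}]$, where $x_0=-\infty<x_1<\dots<x_{|X|}<x_{|X|+1}=+\infty$ enumerate $X$ augmented with infinities, and analogously for $Y$. A short case analysis (considering the leftmost $X$-coordinate lying between the projections of two points) shows that two points of $\rr^2$ are strictly separated by $(X,Y)$ if and only if they do not share any such closed cell. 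Consequently, $(X,Y)$ is a valid separation precisely when no closed cell contains points from both $W_1$ and $W_2$.

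The preprocessing step $\cal A$ will enumerate $h_X(W_1\cup W_2)$ and $h_Y(W_1\cup W_2)$ and, for every pair $(a_1,a_2)$ from $h_X(W_1\cup W_2)\cup\{-\infty,+\infty\}$ with $a_1\le a_2$, every pair $(b_1,b_2)$ from $h_Y(W_1\cup W_2)\cup\{-\infty,+\infty\}$ with $b_1\le b_2$, and each $i\in\{1,2\}$, store a single bit recording whether $[a_1,a_2]\times[b_1,b_2]$ meets $W_i$; this $\Oh(n^4)$-entry table is computable in polynomial time by sweeping over the input points. Given the witness, the verifier $\cal B$ sorts $X$ and $Y$ in $\Oh(k\log k)$ time and then, for each of the $\Oh(k^2)$ consecutive cells $[x_i,x_{i+1}]\times[y_j,y_{j+1}]$, performs a constant-time lookup, rejecting as soon as a cell is witnessed to meet both input sets. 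The total verification takes time $\Oh(k^2)$ on a witness of length $\Oh(k\log n)$, which together with the FPT algorithm fulfills the criterion of Theorem~\ref{thm:characterization}.
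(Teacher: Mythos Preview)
Your proof is correct and follows essentially the same route as the paper: precompute, for every axis-aligned rectangle with corners in $h_X(W_1\cup W_2)\cup\{\pm\infty\}$ and $h_Y(W_1\cup W_2)\cup\{\pm\infty\}$, whether it meets $W_1$ and whether it meets $W_2$, then let the verifier inspect the $\Oh(k^2)$ grid cells induced by the guessed separation. The only cosmetic differences are that the paper stores a single ``bad'' bit per rectangle and retrieves it via binary search (yielding $\Oh(k^2\log n)$ verification time), whereas you store two bits and use direct table lookup; also note that your table actually has $\Oh(n^8)$ rather than $\Oh(n^4)$ entries, since $|h_X|,|h_Y|=\Oh(n^2)$, but this is still polynomial and does not affect the argument.
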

\begin{proof}
    First we check that {\sc Optimal Discretization} can be solved in time $2^{\Oh(k^2 \log k)}n^{\Oh(1)}$~\cite{KratschMMPS21}.
    The algorithm $\cal A$ of the witness verification protocol considers all tuples $(x_1,x_2,y_1,y_2)$ where $x_1, x_2 \in h_X(W_1 \cup W_2) \cup \{-\infty,\infty\}$, $y_1, y_2 \in h_Y(W_1 \cup W_2) \cup \{-\infty,\infty\}$, $x_1 < x_2$, and $y_1 < y_2$.
    Their number is polynomial in $|W_1| + |W_2|$.
    For a tuple $(x_1,x_2,y_1,y_2)$ it checks whether $[x_1,x_2] \times [y_1,y_2]$ contains at least one point from $W_1$ and from $W_2$.
    If it contains both, then such tuple is marked as bad.
     The algorithm $\cal A$ outputs the set of bad tuples, sorted lexicographically.

     The witness supplied to the algorithm $\cal B$ is a pair $(X,Y)$ where $X \sub h_X(W_1 \cup W_2)$,  $Y \sub h_Y(W_1 \cup W_2)$, and $|X|+|Y|\le k$.
     By \Cref{lem:disc:h} we can assume that there exists an optimal solution of this form.
     Moreover, we require the sets $X,Y$ to be listed in the increasing order of elements. 
     The verification mechanism considers all tuples $(x_1,x_2,y_1,y_2)$ where $x_1,x_2$ are consecutive elements in $X \cup \{-\infty,\infty\}$ and likewise for $y_1,y_2$.
     There are $\Oh(k^2)$ such tuples, each corresponding to a connected component of $\rr^2 \sm \bigcup_{z \in X\cup Y} \ell_z$.
     Then $(X,Y)$ forms a separation for $(W_1,W_2)$ if and only if none of these tuples has been marked as bad.
     One can check if $(x_1,x_2,y_1,y_2)$ is bad by performing binary search in the data structure provided by the algorithm $\cal A$.
     In summary, the witness can be encoded with $\Oh(k\cdot \log n)$ bits whereas verification takes time $\Oh(k^2\cdot \log n)$.
\end{proof}

\section{Weighted Vertex Cover}\label{sec:vc}

The {\em vertex cover number} (VC number) of $G$  is the size of the smallest vertex cover in~$G$.

\defparproblem{Weighted Vertex Cover / VC}
{graph $G$, weight function $w \colon V(G) \to \zz_{\ge 0}$ given in unary, integer $\ell$}
{is there a vertex cover in $G$ of total weight at most $\ell$?}
{VC number of $G$}

The unweighted variant of {\sc Vertex Cover} is probably the most heavily studied problem in kernelization~\cite{chor2005linear, Fellows13, fomin2019kernelization, kratsch2018randomized}.
The weighted variant has a polynomial kernel when parameterized by the number of vertices in the sought solution~\cite{EtscheidKMR17} but it is
unlikely to admit a polynomial kernel when parameterized by the VC number, even when the weights are given in unary~\cite{JansenB13}.
We remark that if the weights were provided in binary, the theorem below would also hold but only when treating the logarithm of the maximal weight as an additional parameter.

\thmVC*
\begin{proof}
We can assume that a vertex cover $Y$ of $G$ of size at most $2k$ is known, due to the classic 2-approximation algorithm.
The problem can be readily solved in time $2^{|Y|} \cdot |V(G)|$ so by the standard argument we can assume that $\log n \le k$, where $n$ is the total input size.
Consider the following mixed integer linear program.

\begin{align*}
	\min & \sum_{v\in V(G)} w(v) \cdot x_v \\
	\forall_{u \in Y} \quad & \sum_{v \in N_G(u)} x_v + \text{deg}_G(u)\cdot x_u \ge \text{deg}_G(u) \\ 
	\forall_{v\in V(G)} \quad & 0 \le x_v \le 1 \\
	\forall_{u\in Y} \quad & x_u \in \zz 
\end{align*}

The number of integral variables, as well as the number of constrains, is $|Y| \le 2k$.
Recall that the bounds $0 \le x_v \le 1$ are not counted as constraints in our setup.

We claim that the optimum of this program equals
the minimum weight vertex cover in $G$.
First, suppose that $S \sub V(G)$ is a vertex cover.
We claim that the binary characteristic vector $x$ of $S$
 satisfies all the constraints.
Consider $u \in Y$.
If $u \in S$ then clearly $\sum_{v \in N_G(u)} x_v + \text{deg}_G(u)\cdot x_u \ge \text{deg}_G(u)$.
If $u \not\in S$ then $N_G(u) \sub S$ and again the inequality holds.

In the other direction, consider a feasible solution $x$. 
We claim that set $S = \{ v\in V(G) \colon x_v = 1\}$ forms a vertex cover.
We need to show that every edge $uw \in E(G)$ is covered by $S$. 
Since $Y$ is also a vertex cover, we can assume w.l.o.g. that $u \in Y$.
Suppose that $x_u < 1$.
Because $x_u$ is an integral variable, this implies $x_u = 0$, and so it must hold that $\sum_{v \in N_G(u)} x_v \ge \text{deg}_G(u)$.
All the variables are upper bounded by 1, which enforces that $x_v = 1$ for every $v \in N_G(u)$, in particular for $v = w$.
Hence the presented MILP indeed models minimum weight vertex cover. 

In order to obtain a feasibility problem we replace the objective function with a constraint $\sum_{v\in V(G)} w(v) \cdot x_v \le \ell$.
This proves the second part of the theorem.
Observe that the entries in the obtained ILP matrix are vertex degrees and vertex weights, which are at most $2^k$ because we assumed $\log n \le k$.
This gives the bound on $\Delta(A)$ and proves the first part of the theorem, by simply treating all the variables as integral.
\end{proof}

It is tempting to try to get rid of the upper bounds $x_v \le 1$ and attain an ILP in the form considered in \Cref{thm:ilp-wk1}.
If we allow for not $\Oh(k)$ but $\poly(k)$ constraints, this is equivalent to proving that {\sc Weighted Vertex Cover / VC} is in WK[1].
A potential witness verification protocol could guess the intersection of a solution $S$ with the vertex cover $Y$ but it is unclear how to verify quickly that the
remaining part of the solution has small weight.

Finally, we justify \Cref{cor:vc} for unweighted {\sc Vertex Cover} parameterized by the solution size $k$.
First, we can run the known kernelization algorithm~\cite{fomin2019kernelization} to transform a given instance $(G,k)$ to $(G',k')$ where $|V(G')| \le 2k$ and $k' \le k$. 
Then the ILP from  \Cref{thm:vc} has at most $2k$ variables, at most $4k$ constraints (including the bounds $x_i \le 1)$, and also $\Delta(A) \le 2k$.

\corVC*

\section{Characterization of WK[1]}\label{sec:char}

To prove \Cref{thm:characterization} we will reduce problems with fast witness verification protocols to {\sc Binary NDTM Halting}.
Since the running time is measured in the RAM model, we need to translate the protocol to short computation of a Turing machine.
However, the standard translation methods do not preserve sublinear running times because a regular Turing machine spends a lot of time moving the head along the tape.
To overcome this issue, we will consider Turing machines that can operate their heads more efficiently.

\begin{definition}[Random-access Turing Machine (RTM) \cite{GurevichS89}]
An RTM is a Turing machine with three linear tapes over binary alphabet, called the {\em main tape} $M$,
the {\em address tape} $A$ and the {\em auxiliary tape} $X$, such that the head of the main tape (the main
head) is always in the cell whose number is the contents of the address tape in binary. An instruction
for an RTM has the form
\[
(p, \alpha_M, \alpha_A, \alpha_X) \to (q, \beta_M, \beta_A, \beta_X, \gamma_A, \gamma_X)
\]
and means the following: If the control state is $p$ and the symbols in the observed cells on
the three tapes are binary digits $\alpha_M, \alpha_A, \alpha_X$ respectively, then print binary digits $\beta_M, \beta_A, \beta_X$
in the respective cells, move the address head to the left (resp. to the right) if $\gamma_A = -1$
(resp. $\gamma_A = +1$), move the auxiliary tape head with respect to $\gamma_X$, and go to control state $q$.
\end{definition}

\begin{proposition}[\cite{GurevichS89}]
\label{prop:char:simulate}
    Any computation in the RAM model with logarithmic word length on $n$-length input can be simulated by a RTM 
    with $\mathsf{polylog}(n)$
     overhead in the running time and with lengths of the address and auxiliary tapes bounded by $\Oh(\log n)$. 
\end{proposition}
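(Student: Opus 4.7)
The plan is to give a blockwise simulation of the RAM by the RTM, in the classical style. Fix the word length as $w = \Oh(\log n)$, so each RAM memory cell stores a word of $w$ bits. On the RTM we lay out memory on the main tape by reserving a block of $w$ consecutive cells for each RAM address: the contents of RAM cell $i$ live on main-tape positions $iw, iw+1, \ldots, (i+1)w-1$. The input of length $n$ is initially written into the first $n$ blocks in any natural encoding. With this layout, the content of the address tape at any point is simply the main-tape position of the head, viewed as a $\Oh(\log n)$-bit binary number.

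Next, I would spell out the simulation of a single RAM step. A RAM instruction manipulates a constant number of word-sized values: a few registers, possibly an indirectly addressed memory cell, and an opcode. Registers and temporary values are stored in fixed regions of the auxiliary tape. To read RAM cell $a$, the RTM first computes $a \cdot w$ on the auxiliary tape by a textbook Turing-machine multiplication of two $\Oh(\log n)$-bit numbers, which costs $\mathsf{polylog}(n)$ steps; it then copies the resulting address to the address tape and scans $w$ consecutive main-tape cells, copying each bit to a designated register on the auxiliary tape, in $\Oh(w)$ steps. Writes are symmetric. Arithmetic, comparisons, and bitwise operations on $w$-bit words living on the auxiliary tape are implemented by standard Turing-machine subroutines in $\mathsf{polylog}(n)$ time. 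Conditional jumps and the program counter are handled by the finite control of the RTM, whose state set is an $\Oh(1)$-size object derived from the RAM program.

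The overhead bookkeeping is then straightforward: each RAM step triggers $\Oh(1)$ memory accesses and word-level arithmetic operations, each costing $\mathsf{polylog}(n)$ RTM steps, so the total slowdown is $\mathsf{polylog}(n)$. For the tape-length bounds, the address tape at every moment holds exactly one address into the main tape, of magnitude at most $w \cdot (\text{RAM cells ever touched})$; since a RAM computation on an $n$-length input uses at most $\poly(n)$ memory (this is the standard convention for a RAM with logarithmic word length, and the regime of interest for \Cref{thm:characterization}), this address fits in $\Oh(\log n)$ bits. The auxiliary tape need only hold a constant number of $w$-bit registers together with scratch space for one multiplication or addition, again $\Oh(\log n)$ bits.

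The only genuinely delicate point is that the RTM is not allowed to amortise head movement by leaving partial addresses on the address tape between steps: every memory access must recompute the address from the current register contents on the auxiliary tape. This is the reason for the $\mathsf{polylog}(n)$ factor rather than $\Oh(\log n)$, and it is also what makes the analysis of the address tape length trivial, since after each access we may rewrite it from scratch. All remaining details (encoding the RAM instruction set, pipelining register updates, handling the input format) are routine and contribute no further overhead beyond $\mathsf{polylog}(n)$.
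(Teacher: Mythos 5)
This proposition is imported from \cite{GurevichS89}; the paper states it without proof, so there is no in-paper argument to compare against. Your blockwise simulation is the standard construction underlying the cited result and is essentially correct: reserving $w=\Oh(\log n)$ consecutive main-tape cells per RAM word, recomputing the block address $a\cdot w$ on the auxiliary tape before each access, copying it bit by bit onto the address tape, and charging $\mathsf{polylog}(n)$ RTM steps per RAM instruction (including the $\Oh(w\log n)$ cost of incrementing the address counter while scanning a block) all go through. The one point worth making explicit is that the $\Oh(\log n)$ bound on the address tape is not automatic from the RTM definition --- in \cite{GurevichS89} the address tape of a frugal machine is allowed to grow like $\log(1+t)$ with the running time $t$ --- but holds here exactly for the reason you give: with word length $\Oh(\log n)$ the RAM addresses only $\poly(n)$ cells, so every address ever written during the simulation has $\Oh(\log n)$ bits. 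With that caveat spelled out, the argument is complete.
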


\thmCHAR*
\begin{proof}
    If $L \in$ WK[1] then it admits a PPT to the WK[1]-complete problem {\sc Set Cover} parameterized by the universe size.
    It is easy to see that {\sc Set Cover} enjoy both properties and they are preserved by PPT.

    In the other direction, we will show that any problem satisfying both conditions admits a PPT to {\sc Binary NDTM Halting}, which is WK[1]-complete~\cite{HermelinKSWW15}.
    Consider the verification algorithm $\cal B$ that is given the output of the algorithm $\cal A$ together with a witness of length $\poly(k,\log n)$ and verifies it in time $\poly(k,\log n)$.
    By \Cref{prop:char:simulate} there exists an RTM ${\cal B}_1$ that performs the same computation as $\cal B$ within time $\poly(k,\log n)$ and so that the address and auxiliary tapes populate only $\Oh(\log n)$ first cells.

    We will modify ${\cal B}_1$ so that the main tape $M$ is read-only.
    Let $\cal T$ be a standard (i.e., not random access) Turing machine with the following specification. 
    The input word $y$ is a list of pairs $(a_i, b_i)$ where $a_i$ is a binary string of length $s$ and $b_i \in \{0,1\}$ and a string $a$.
    The machine finds the pair occurrence $(a_i,b_i)$ with $a_i = a$ and finishes in one of three states, reporting that either $b_i = 0$, $b_i = 1$, or that such a pair does not exit, and moving the head to the initial position.
    Clearly, there exists such a machine $\cal T$ that runs in  time $\Oh(|y|^c)$ for some $c\in \nn$.

    We now create an RTM ${\cal B}_2$ with additional auxiliary tape $Y$ whose states are pairs $(q_B,q_T)$ where $q_B$ is a state if ${\cal B}_1$ and $q_T$ is a state of ${\cal T}$.
    The machine ${\cal B}_2$ simulates ${\cal B}_1$ in a following way: instead of writing a symbol to the tape $M$ it appends a pair $(a,b)$ to the tape $Y$ where $a$ is the current content of the address tape and $b$ is the symbol meant to be written to $M$.
    After reading a symbol from $M$ it performs computation of machine $\cal T$ on the tape $Y$.
    After this subroutine, the state of $\cal T$ informs what symbol would be read by  ${\cal B}_1$ at that time.
    If the machine ${\cal B}_1$ runs for $t$ steps, then the input length for the subroutine is $\Oh(t\cdot \log n)$,
    hence this modification incurs running time overhead of $\Oh((t\cdot \log n)^c)$.
    Clearly ${\cal B}_2$ accepts the same strings as ${\cal B}_1$ and its running time is still of the form  $\poly(k,\log n)$.

    We are ready to describe the reduction from $L$ to {\sc Binary NDTM Halting}.
    Given an instance $(I,k)$ we first check if $\log n > p(k)$ where $p(k)$ is the exponent of the FPT algorithm from condition one.
    If yes, the running time $2^{p(k)}\cdot|I|^{\Oh(1)}$ becomes polynomial in $|I|$ and so we can further assume  $\log n \le p(k)$.
    The reduction first runs the algorithm $\cal A$ and receives outputs $s$.
    We showed that the existence of the algorithm $\cal B$ implies the existence of the RTM ${\cal B}_2$ with the following properties: (1) ${\cal B}_2$ works on tapes $M,A,X,Y$ with read-only access to $M$, (2) the length of $A,X$ is bounded by $\Oh(\log n)$, (3) ${\cal B}_2$ finishes within time $\ell \le \poly(k,\log n) = \poly(k)$ in input $x+w$ where $w$ is a witness of prescribed length.
    We can easily modify this machine to have an additional auxiliary tape $W$ so that initially $M$ stores $x$ and $W$ stores $w$; let us call this machine ${\cal B}_3$.
    The input to {\sc Binary NDTM Halting} is a nondeterministic Turing machine ${\cal H}_s$ whose states are triples $(q,a,x)$ where $q$ is a state of ${\cal B}_3$ and $a,x$ are binary strings of length $\Oh(\log n)$.
    The instructions are given by mimicking the instructions ${\cal B}_3$ assuming that $s$ is the concent of tape $M$.
    Note that ${\cal H}_s$ has description size $n^{\Oh(1)}$.
    The machine ${\cal H}_s$ first nondeterministically guesses the word $w$ on tape $W$ and then simulates ${\cal B}_3$ on the input $(s,w)$.
    Observe that ${\cal H}_s$ can simulate ${\cal B}_3$ because the latter machine is read-only with respect to the tape $M$ storing $s$.
    The two tapes $Y,W$ can be easily merged into a single one while keeping the running time polynomial in $k$.
    If the simulation would accept the input $(s,w)$ before the $\ell$-th step then ${\cal H}_s$ is ordered to halt and otherwise it loops.
    Hence the machine can halt if and only if $(I,k)$ admits a witness $w$ that would be accepted by the protocol.
    Since $\ell = \poly(k)$, this yields a PPT to {\sc Binary NDTM Halting}. 
\end{proof}

\section{Conclusion}
\label{sec:conclusion}

We have presented a characterization of parameterized problems reducible to ILP with few constraints and classified several new problems into this category through witness verification protocols.
We find this connection surprising and inspiring for exploring new directions in parameterized complexity. 
Can we further extend this classification?
In particular we ask whether the following problems belong to WK[1]: {\sc Weighted Vertex Cover / VC}, {\sc Connected FVS}, {\sc Directed FVS}, and {\sc Vertex Planarization} (the last three parameterized by the solution size).

Our classification concerns ILPs in the standard form without upper bounds.
Allowing for upper bounds on variables gives a greater expressive power when the number of constraints is a parameter (see \Cref{obs:ilp:binary} and also~\cite{Rohwedder25}).
It would be interesting to also characterize parameterized problems reducible to upper-bounded ILP with few constraints. 
However, the location of such a feasibility problem in the parameterized complexity landscape remains unclear because we do not know whether it admits short NP-witnesses~\cite{Wlodarczyk2024}.
Yet another direction is to describe problems that are reducible to ILP with few variables.

Containment in WK[1] only ensures that a  problem can be modeled by ILP with $\poly(k)$ constraints.
Even though the degree of this polynomial can be retraced from the chain of reductions, it is unlikely to be optimal.
As the natural next step towards bridging the gap between theory and practice, we would like to optimize the polynomial degree, by designing explicit ILP formulation with 
$\Oh(k^c)$ constraints, where $c$ is as small as possible, ideally one.
A good starting point would be to consider
problems with simple proofs of containment in WK[1], such as
{\sc $k$-Path} or {\sc Connected Vertex Cover}~\cite{HermelinKSWW15}. 

\bibliography{bib}

\begin{thebibliography}{10}

\bibitem{Artmann17}
Stephan Artmann, Robert Weismantel, and Rico Zenklusen.
\newblock A strongly polynomial algorithm for bimodular integer linear programming.
\newblock In {\em Proceedings of the 49th Annual ACM SIGACT Symposium on Theory of Computing}, STOC 2017, page 1206–1219, New York, NY, USA, 2017. Association for Computing Machinery.
\newblock \href {https://doi.org/10.1145/3055399.3055473} {\path{doi:10.1145/3055399.3055473}}.

\bibitem{bafna19992}
Vineet Bafna, Piotr Berman, and Toshihiro Fujito.
\newblock A 2-approximation algorithm for the undirected feedback vertex set problem.
\newblock {\em SIAM Journal on Discrete Mathematics}, 12(3):289--297, 1999.
\newblock \href {https://doi.org/10.1137/S0895480196305124} {\path{doi:10.1137/S0895480196305124}}.

\bibitem{Bannach24pace}
Max Bannach, Florian Chudigiewitsch, Kim-Manuel Klein, and Marcel Wien\"{o}bst.
\newblock {PACE Solver Description: UzL Exact Solver for One-Sided Crossing Minimization}.
\newblock In \'{E}douard Bonnet and Pawe{\l} Rz\k{a}\.{z}ewski, editors, {\em 19th International Symposium on Parameterized and Exact Computation (IPEC 2024)}, volume 321 of {\em Leibniz International Proceedings in Informatics (LIPIcs)}, pages 28:1--28:4, Dagstuhl, Germany, 2024. Schloss Dagstuhl -- Leibniz-Zentrum f{\"u}r Informatik.
\newblock \href {https://doi.org/10.4230/LIPIcs.IPEC.2024.28} {\path{doi:10.4230/LIPIcs.IPEC.2024.28}}.

\bibitem{bergenthal22pace}
Moritz Bergenthal, Jona Dirks, Thorben Freese, Jakob Gahde, Enna Gerhard, Mario Grobler, and Sebastian Siebertz.
\newblock {PACE Solver Description: GraPA-JAVA}.
\newblock In Holger Dell and Jesper Nederlof, editors, {\em 17th International Symposium on Parameterized and Exact Computation (IPEC 2022)}, volume 249 of {\em Leibniz International Proceedings in Informatics (LIPIcs)}, pages 30:1--30:4, Dagstuhl, Germany, 2022. Schloss Dagstuhl -- Leibniz-Zentrum f{\"u}r Informatik.
\newblock \href {https://doi.org/10.4230/LIPIcs.IPEC.2022.30} {\path{doi:10.4230/LIPIcs.IPEC.2022.30}}.

\bibitem{BODLAENDER201586}
Hans~L. Bodlaender, Marek Cygan, Stefan Kratsch, and Jesper Nederlof.
\newblock Deterministic single exponential time algorithms for connectivity problems parameterized by treewidth.
\newblock {\em Information and Computation}, 243:86--111, 2015.
\newblock 40th International Colloquium on Automata, Languages and Programming (ICALP 2013).
\newblock \href {https://doi.org/10.1016/j.ic.2014.12.008} {\path{doi:10.1016/j.ic.2014.12.008}}.

\bibitem{Bodlaender09}
Hans~L. Bodlaender, Rodney~G. Downey, Michael~R. Fellows, and Danny Hermelin.
\newblock On problems without polynomial kernels.
\newblock {\em Journal of Computer and System Sciences}, 75(8):423--434, 2009.
\newblock \href {https://doi.org/10.1016/j.jcss.2009.04.001} {\path{doi:10.1016/j.jcss.2009.04.001}}.

\bibitem{BodlaenderGNS21}
Hans~L. Bodlaender, Carla Groenland, Jesper Nederlof, and C{\'{e}}line M.~F. Swennenhuis.
\newblock Parameterized problems complete for nondeterministic {FPT} time and logarithmic space.
\newblock In {\em 62nd {IEEE} Annual Symposium on Foundations of Computer Science, {FOCS} 2021, Denver, CO, USA, February 7-10, 2022}, pages 193--204. {IEEE}, 2021.
\newblock \href {https://doi.org/10.1109/FOCS52979.2021.00027} {\path{doi:10.1109/FOCS52979.2021.00027}}.

\bibitem{bodlaender2013kernel}
Hans~L Bodlaender, Bart~MP Jansen, and Stefan Kratsch.
\newblock Kernel bounds for path and cycle problems.
\newblock {\em Theoretical Computer Science}, 511:117--136, 2013.
\newblock \href {https://doi.org/10.1016/j.tcs.2012.09.006} {\path{doi:10.1016/j.tcs.2012.09.006}}.

\bibitem{boehmer24pace}
Kimon Boehmer, Lukas~Lee George, Fanny Hauser, and Jesse Palarus.
\newblock {PACE Solver Description: Arcee}.
\newblock In {\em 19th International Symposium on Parameterized and Exact Computation (IPEC 2024)}, volume 321 of {\em Leibniz International Proceedings in Informatics (LIPIcs)}, pages 33:1--33:4, Dagstuhl, Germany, 2024. Schloss Dagstuhl -- Leibniz-Zentrum f{\"u}r Informatik.
\newblock \href {https://doi.org/10.4230/LIPIcs.IPEC.2024.33} {\path{doi:10.4230/LIPIcs.IPEC.2024.33}}.

\bibitem{bonnet2018parameterized}
\'{E}douard Bonnet, Panos Giannopoulos, and Michael Lampis.
\newblock {On the Parameterized Complexity of Red-Blue Points Separation}.
\newblock In Daniel Lokshtanov and Naomi Nishimura, editors, {\em 12th International Symposium on Parameterized and Exact Computation (IPEC 2017)}, volume~89 of {\em Leibniz International Proceedings in Informatics (LIPIcs)}, pages 8:1--8:13, Dagstuhl, Germany, 2018. Schloss Dagstuhl -- Leibniz-Zentrum f{\"u}r Informatik.
\newblock \href {https://doi.org/10.4230/LIPIcs.IPEC.2017.8} {\path{doi:10.4230/LIPIcs.IPEC.2017.8}}.

\bibitem{BrandKO21}
Cornelius Brand, Martin Kouteck{\'{y}}, and Sebastian Ordyniak.
\newblock Parameterized algorithms for {MILPs} with small treedepth.
\newblock In {\em Thirty-Fifth {AAAI} Conference on Artificial Intelligence, {AAAI} 2021, Thirty-Third Conference on Innovative Applications of Artificial Intelligence, {IAAI} 2021, The Eleventh Symposium on Educational Advances in Artificial Intelligence, {EAAI} 2021, Virtual Event, February 2-9, 2021}, pages 12249--12257. {AAAI} Press, 2021.
\newblock \href {https://doi.org/10.1609/AAAI.V35I14.17454} {\path{doi:10.1609/AAAI.V35I14.17454}}.

\bibitem{BulteauK14}
Laurent Bulteau and Christian Komusiewicz.
\newblock Minimum common string partition parameterized by partition size is fixed-parameter tractable.
\newblock In Chandra Chekuri, editor, {\em Proceedings of the 25th Annual {ACM-SIAM} Symposium on Discrete Algorithms, {SODA} 2014, Portland, Oregon, USA, January 5-7, 2014}, pages 102--121. {SIAM}, 2014.
\newblock \href {https://doi.org/10.1137/1.9781611973402.8} {\path{doi:10.1137/1.9781611973402.8}}.

\bibitem{Burrage06}
Kevin Burrage, Vladimir Estivill-Castro, Michael Fellows, Michael Langston, Shev Mac, and Frances Rosamond.
\newblock The undirected feedback vertex set problem has a poly(k) kernel.
\newblock In Hans~L. Bodlaender and Michael~A. Langston, editors, {\em Parameterized and Exact Computation}, pages 192--202, Berlin, Heidelberg, 2006. Springer Berlin Heidelberg.
\newblock \href {https://doi.org/10.1007/11847250_18} {\path{doi:10.1007/11847250_18}}.

\bibitem{chenapplied}
Der-San Chen, Robert~G Batson, and Yu~Dang.
\newblock {\em Applied Integer Programming: Modeling and Solution}.
\newblock Wiley Online Library, 2009.
\newblock \href {https://doi.org/10.1002/9781118166000} {\path{doi:10.1002/9781118166000}}.

\bibitem{chen2024fpt}
Hua Chen, Lin Chen, and Guochuan Zhang.
\newblock Fpt algorithms for a special block-structured integer program with applications in scheduling.
\newblock {\em Mathematical Programming}, 208(1):463--496, 2024.
\newblock \href {https://doi.org/10.1007/s10951-017-0550-0} {\path{doi:10.1007/s10951-017-0550-0}}.

\bibitem{chor2005linear}
Benny Chor, Mike Fellows, and David Juedes.
\newblock Linear kernels in linear time, or how to save k colors in {$O(n^2)$} steps.
\newblock In {\em Graph-Theoretic Concepts in Computer Science: 30th International Workshop, WG 2004, Bad Honnef, Germany, June 21-23, 2004. Revised Papers 30}, pages 257--269. Springer, 2005.
\newblock \href {https://doi.org/10.1007/978-3-540-30559-0_22} {\path{doi:10.1007/978-3-540-30559-0_22}}.

\bibitem{CslovjecsekKLPP24}
Jana Cslovjecsek, Martin Kouteck{\'{y}}, Alexandra Lassota, Michał Pilipczuk, and Adam Polak.
\newblock Parameterized algorithms for block-structured integer programs with large entries.
\newblock In David~P. Woodruff, editor, {\em Proceedings of the 2024 {ACM-SIAM} Symposium on Discrete Algorithms, {SODA} 2024, Alexandria, VA, USA, January 7-10, 2024}, pages 740--751. {SIAM}, 2024.
\newblock \href {https://doi.org/10.1137/1.9781611977912.29} {\path{doi:10.1137/1.9781611977912.29}}.

\bibitem{Cunningham07}
William~H. Cunningham and Jim Geelen.
\newblock On integer programming and the branch-width of the constraint matrix.
\newblock In Matteo Fischetti and David~P. Williamson, editors, {\em Integer Programming and Combinatorial Optimization}, pages 158--166, Berlin, Heidelberg, 2007. Springer Berlin Heidelberg.
\newblock \href {https://doi.org/10.1007/978-3-540-72792-7_13} {\path{doi:10.1007/978-3-540-72792-7_13}}.

\bibitem{cygan2020randomized}
Marek Cygan, Pawe{\l} Komosa, Daniel Lokshtanov, Marcin Pilipczuk, Micha{\l} Pilipczuk, Saket Saurabh, and Magnus Wahlstr{\"o}m.
\newblock Randomized contractions meet lean decompositions.
\newblock {\em ACM Transactions on Algorithms (TALG)}, 17(1):1--30, 2020.
\newblock \href {https://doi.org/10.1145/3426738} {\path{doi:10.1145/3426738}}.

\bibitem{Cygan14}
Marek Cygan, Stefan Kratsch, Marcin Pilipczuk, Micha\l{} Pilipczuk, and Magnus Wahlstr\"{o}m.
\newblock Clique cover and graph separation: New incompressibility results.
\newblock {\em ACM Trans. Comput. Theory}, 6(2), May 2014.
\newblock \href {https://doi.org/10.1145/2594439} {\path{doi:10.1145/2594439}}.

\bibitem{CyganPPW13}
Marek Cygan, Marcin Pilipczuk, Michal Pilipczuk, and Jakub~Onufry Wojtaszczyk.
\newblock On multiway cut parameterized above lower bounds.
\newblock {\em {ACM} Trans. Comput. Theory}, 5(1):3:1--3:11, 2013.
\newblock \href {https://doi.org/10.1145/2462896.2462899} {\path{doi:10.1145/2462896.2462899}}.

\bibitem{dantzig1959linear}
George~Bernard Dantzig, Delbert~R Fulkerson, and Selmer~Martin Johnson.
\newblock On a linear-programming, combinatorial approach to the traveling-salesman problem.
\newblock {\em Operations Research}, 7(1):58--66, 1959.
\newblock \href {https://doi.org/10.1287/opre.7.1.58} {\path{doi:10.1287/opre.7.1.58}}.

\bibitem{DirksGRSSS21pace}
Jona Dirks, Mario Grobler, Roman Rabinovich, Yannik Schnaubelt, Sebastian Siebertz, and Maximilian Sonneborn.
\newblock {PACE} solver description: {PACA-JAVA}.
\newblock In Petr~A. Golovach and Meirav Zehavi, editors, {\em 16th International Symposium on Parameterized and Exact Computation, {IPEC} 2021, September 8-10, 2021, Lisbon, Portugal}, volume 214 of {\em LIPIcs}, pages 30:1--30:4. Schloss Dagstuhl - Leibniz-Zentrum f{\"{u}}r Informatik, 2021.
\newblock \href {https://doi.org/10.4230/LIPICS.IPEC.2021.30} {\path{doi:10.4230/LIPICS.IPEC.2021.30}}.

\bibitem{dobler24pace}
Alexander Dobler.
\newblock {PACE Solver Description: CRGone}.
\newblock In \'{E}douard Bonnet and Pawe{\l} Rz\k{a}\.{z}ewski, editors, {\em 19th International Symposium on Parameterized and Exact Computation (IPEC 2024)}, volume 321 of {\em Leibniz International Proceedings in Informatics (LIPIcs)}, pages 29:1--29:4, Dagstuhl, Germany, 2024. Schloss Dagstuhl -- Leibniz-Zentrum f{\"u}r Informatik.
\newblock \href {https://doi.org/10.4230/LIPIcs.IPEC.2024.29} {\path{doi:10.4230/LIPIcs.IPEC.2024.29}}.

\bibitem{Downey95}
Rodney~G. Downey and Michael~R. Fellows.
\newblock Parameterized computational feasibility.
\newblock In Peter Clote and Jeffrey~B. Remmel, editors, {\em Feasible Mathematics II}, pages 219--244, Boston, MA, 1995. Birkh{\"a}user Boston.
\newblock \href {https://doi.org/10.1007/978-1-4612-2566-9_7} {\path{doi:10.1007/978-1-4612-2566-9_7}}.

\bibitem{Dreyfus71}
S.~E. Dreyfus and R.~A. Wagner.
\newblock The steiner problem in graphs.
\newblock {\em Networks}, 1(3):195--207, 1971.
\newblock \href {https://doi.org/10.1002/net.3230010302} {\path{doi:10.1002/net.3230010302}}.

\bibitem{DruckerNS16}
Andrew Drucker, Jesper Nederlof, and Rahul Santhanam.
\newblock {Exponential Time Paradigms Through the Polynomial Time Lens}.
\newblock In Piotr Sankowski and Christos Zaroliagis, editors, {\em 24th Annual European Symposium on Algorithms (ESA 2016)}, volume~57 of {\em Leibniz International Proceedings in Informatics (LIPIcs)}, pages 36:1--36:14, Dagstuhl, Germany, 2016. Schloss Dagstuhl -- Leibniz-Zentrum f{\"u}r Informatik.
\newblock \href {https://doi.org/10.4230/LIPIcs.ESA.2016.36} {\path{doi:10.4230/LIPIcs.ESA.2016.36}}.

\bibitem{Dvorak17}
Pavel Dvo\v{r}\'{a}k, Eduard Eiben, Robert Ganian, Du\v{s}an Knop, and Sebastian Ordyniak.
\newblock Solving integer linear programs with a small number of global variables and constraints.
\newblock In {\em Proceedings of the 26th International Joint Conference on Artificial Intelligence}, IJCAI'17, page 607–613. AAAI Press, 2017.
\newblock \href {https://doi.org/10.5555/3171642.3171730} {\path{doi:10.5555/3171642.3171730}}.

\bibitem{pace19}
M.~Ayaz Dzulfikar, Johannes~K. Fichte, and Markus Hecher.
\newblock {The PACE 2019 Parameterized Algorithms and Computational Experiments Challenge: The Fourth Iteration}.
\newblock In Bart M.~P. Jansen and Jan~Arne Telle, editors, {\em 14th International Symposium on Parameterized and Exact Computation (IPEC 2019)}, volume 148 of {\em Leibniz International Proceedings in Informatics (LIPIcs)}, pages 25:1--25:23, Dagstuhl, Germany, 2019. Schloss Dagstuhl -- Leibniz-Zentrum f{\"u}r Informatik.
\newblock \href {https://doi.org/10.4230/LIPIcs.IPEC.2019.25} {\path{doi:10.4230/LIPIcs.IPEC.2019.25}}.

\bibitem{EisenbrandHKKLO25}
Friedrich Eisenbrand, Christoph Hunkenschr{\"{o}}der, Kim{-}Manuel Klein, Martin Kouteck{\'{y}}, Asaf Levin, and Shmuel Onn.
\newblock Sparse integer programming is fixed-parameter tractable.
\newblock {\em Math. Oper. Res.}, 50(3):2141--2156, 2025.
\newblock \href {https://doi.org/10.1287/MOOR.2023.0162} {\path{doi:10.1287/MOOR.2023.0162}}.

\bibitem{EisenbrandS06}
Friedrich Eisenbrand and Gennady Shmonin.
\newblock Carath{\'{e}}odory bounds for integer cones.
\newblock {\em Oper. Res. Lett.}, 34(5):564--568, 2006.
\newblock \href {https://doi.org/10.1016/J.ORL.2005.09.008} {\path{doi:10.1016/J.ORL.2005.09.008}}.

\bibitem{EisenbrandW20}
Friedrich Eisenbrand and Robert Weismantel.
\newblock Proximity results and faster algorithms for integer programming using the {Steinitz} lemma.
\newblock {\em {ACM} Trans. Algorithms}, 16(1):5:1--5:14, 2020.
\newblock \href {https://doi.org/10.1145/3340322} {\path{doi:10.1145/3340322}}.

\bibitem{ElberfeldST15}
Michael Elberfeld, Christoph Stockhusen, and Till Tantau.
\newblock On the space and circuit complexity of parameterized problems: Classes and completeness.
\newblock {\em Algorithmica}, 71(3):661--701, 2015.
\newblock \href {https://doi.org/10.1007/S00453-014-9944-Y} {\path{doi:10.1007/S00453-014-9944-Y}}.

\bibitem{EtscheidKMR17}
Michael Etscheid, Stefan Kratsch, Matthias Mnich, and Heiko R{\"{o}}glin.
\newblock Polynomial kernels for weighted problems.
\newblock {\em J. Comput. Syst. Sci.}, 84:1--10, 2017.
\newblock \href {https://doi.org/10.1016/j.jcss.2016.06.004} {\path{doi:10.1016/j.jcss.2016.06.004}}.

\bibitem{Fellows13}
Michael~R. Fellows, Bart~M.P. Jansen, and Frances Rosamond.
\newblock Towards fully multivariate algorithmics: Parameter ecology and the deconstruction of computational complexity.
\newblock {\em European Journal of Combinatorics}, 34(3):541--566, 2013.
\newblock Combinatorial Algorithms and Complexity.
\newblock \href {https://doi.org/10.1016/j.ejc.2012.04.008} {\path{doi:10.1016/j.ejc.2012.04.008}}.

\bibitem{fiorini25}
Samuel Fiorini, Gwena\"{e}l Joret, Stefan Weltge, and Yelena Yuditsky.
\newblock Integer programs with bounded subdeterminants and two nonzeros per row.
\newblock {\em J. ACM}, 72(1), January 2025.
\newblock \href {https://doi.org/10.1145/3695985} {\path{doi:10.1145/3695985}}.

\bibitem{floudas2005mixed}
Christodoulos~A Floudas and Xiaoxia Lin.
\newblock Mixed integer linear programming in process scheduling: Modeling, algorithms, and applications.
\newblock {\em Annals of Operations Research}, 139:131--162, 2005.
\newblock \href {https://doi.org/10.1007/s10479-005-3446-x} {\path{doi:10.1007/s10479-005-3446-x}}.

\bibitem{fomin2019kernelization}
Fedor Fomin, Daniel Lokshtanov, Saket Saurabh, and Meirav Zehavi.
\newblock {\em Kernelization: theory of parameterized preprocessing}.
\newblock Cambridge University Press, 2019.
\newblock \href {https://doi.org/10.1017/9781107415157} {\path{doi:10.1017/9781107415157}}.

\bibitem{fomin2023optimality}
Fedor~V Fomin, Fahad Panolan, Maadapuzhi~Sridharan Ramanujan, and Saket Saurabh.
\newblock On the optimality of pseudo-polynomial algorithms for integer programming.
\newblock {\em Mathematical Programming}, 198(1):561--593, 2023.
\newblock \href {https://doi.org/10.1007/s10107-022-01783-x} {\path{doi:10.1007/s10107-022-01783-x}}.

\bibitem{Ganian18}
Robert Ganian and Sebastian Ordyniak.
\newblock The complexity landscape of decompositional parameters for ilp.
\newblock {\em Artificial Intelligence}, 257:61--71, 2018.
\newblock \href {https://doi.org/10.1016/j.artint.2017.12.006} {\path{doi:10.1016/j.artint.2017.12.006}}.

\bibitem{Gavenciak22}
Tomáš Gavenčiak, Martin Koutecký, and Dušan Knop.
\newblock Integer programming in parameterized complexity: Five miniatures.
\newblock {\em Discrete Optimization}, 44:100596, 2022.
\newblock Optimization and Discrete Geometry.
\newblock \href {https://doi.org/10.1016/j.disopt.2020.100596} {\path{doi:10.1016/j.disopt.2020.100596}}.

\bibitem{goemans1993catalog}
Michel~X Goemans and Young-Soo Myung.
\newblock A catalog of steiner tree formulations.
\newblock {\em Networks}, 23(1):19--28, 1993.
\newblock \href {https://doi.org/10.1002/net.3230230104} {\path{doi:10.1002/net.3230230104}}.

\bibitem{pace22}
Ernestine Gro{\ss}mann, Tobias Heuer, Christian Schulz, and Darren Strash.
\newblock {The PACE 2022 Parameterized Algorithms and Computational Experiments Challenge: Directed Feedback Vertex Set}.
\newblock In Holger Dell and Jesper Nederlof, editors, {\em 17th International Symposium on Parameterized and Exact Computation (IPEC 2022)}, volume 249 of {\em Leibniz International Proceedings in Informatics (LIPIcs)}, pages 26:1--26:18, Dagstuhl, Germany, 2022. Schloss Dagstuhl -- Leibniz-Zentrum f{\"u}r Informatik.
\newblock \href {https://doi.org/10.4230/LIPIcs.IPEC.2022.26} {\path{doi:10.4230/LIPIcs.IPEC.2022.26}}.

\bibitem{GurevichS89}
Yuri Gurevich and Saharon Shelah.
\newblock Nearly linear time.
\newblock In Albert~R. Meyer and Michael~A. Taitslin, editors, {\em Logic at Botik '89, Symposium on Logical Foundations of Computer Science, Pereslav-Zalessky, USSR, July 3-8, 1989, Proceedings}, volume 363 of {\em Lecture Notes in Computer Science}, pages 108--118. Springer, 1989.
\newblock \href {https://doi.org/10.1007/3-540-51237-3\_10} {\path{doi:10.1007/3-540-51237-3\_10}}.

\bibitem{hermelin2021new}
Danny Hermelin, Shlomo Karhi, Michael Pinedo, and Dvir Shabtay.
\newblock New algorithms for minimizing the weighted number of tardy jobs on a single machine.
\newblock {\em Annals of Operations Research}, 298:271--287, 2021.
\newblock \href {https://doi.org/10.1007/s10479-018-2852-9} {\path{doi:10.1007/s10479-018-2852-9}}.

\bibitem{HermelinKSWW15}
Danny Hermelin, Stefan Kratsch, Karolina Soltys, Magnus Wahlstr{\"{o}}m, and Xi~Wu.
\newblock {A Completeness Theory for Polynomial (Turing) Kernelization}.
\newblock {\em Algorithmica}, 71(3):702--730, 2015.
\newblock \href {https://doi.org/10.1007/S00453-014-9910-8} {\path{doi:10.1007/S00453-014-9910-8}}.

\bibitem{Holm2001}
Jacob Holm, Kristian de~Lichtenberg, and Mikkel Thorup.
\newblock Poly-logarithmic deterministic fully-dynamic algorithms for connectivity, minimum spanning tree, 2-edge, and biconnectivity.
\newblock {\em J. ACM}, 48(4):723–760, July 2001.
\newblock \href {https://doi.org/10.1145/502090.502095} {\path{doi:10.1145/502090.502095}}.

\bibitem{JansenB13}
Bart M.~P. Jansen and Hans~L. Bodlaender.
\newblock Vertex cover kernelization revisited - upper and lower bounds for a refined parameter.
\newblock {\em Theory Comput. Syst.}, 53(2):263--299, 2013.
\newblock \href {https://doi.org/10.1007/S00224-012-9393-4} {\path{doi:10.1007/S00224-012-9393-4}}.

\bibitem{JansenK15}
Bart M.~P. Jansen and Stefan Kratsch.
\newblock A structural approach to kernels for ilps: Treewidth and total unimodularity.
\newblock In Nikhil Bansal and Irene Finocchi, editors, {\em Algorithms - {ESA} 2015 - 23rd Annual European Symposium, Patras, Greece, September 14-16, 2015, Proceedings}, volume 9294 of {\em Lecture Notes in Computer Science}, pages 779--791. Springer, 2015.
\newblock \href {https://doi.org/10.1007/978-3-662-48350-3\_65} {\path{doi:10.1007/978-3-662-48350-3\_65}}.

\bibitem{JansenPL21}
Bart M.~P. Jansen, Marcin Pilipczuk, and Erik~Jan van Leeuwen.
\newblock A deterministic polynomial kernel for odd cycle transversal and vertex multiway cut in planar graphs.
\newblock {\em {SIAM} J. Discret. Math.}, 35(4):2387--2429, 2021.
\newblock \href {https://doi.org/10.1137/20M1353782} {\path{doi:10.1137/20M1353782}}.

\bibitem{Lassota20}
Klaus Jansen, Alexandra Lassota, and Lars Rohwedder.
\newblock Near-linear time algorithm for n-fold {ILPs} via color coding.
\newblock {\em SIAM Journal on Discrete Mathematics}, 34(4):2282--2299, 2020.
\newblock \href {https://doi.org/10.1137/19M1303873} {\path{doi:10.1137/19M1303873}}.

\bibitem{JansenR23}
Klaus Jansen and Lars Rohwedder.
\newblock {On Integer Programming and Convolution}.
\newblock In Avrim Blum, editor, {\em 10th Innovations in Theoretical Computer Science Conference (ITCS 2019)}, volume 124 of {\em Leibniz International Proceedings in Informatics (LIPIcs)}, pages 43:1--43:17, Dagstuhl, Germany, 2019. Schloss Dagstuhl -- Leibniz-Zentrum f{\"u}r Informatik.
\newblock \href {https://doi.org/10.4230/LIPIcs.ITCS.2019.43} {\path{doi:10.4230/LIPIcs.ITCS.2019.43}}.

\bibitem{kannan1987minkowski}
Ravi Kannan.
\newblock Minkowski's convex body theorem and integer programming.
\newblock {\em Mathematics of operations research}, 12(3):415--440, 1987.
\newblock \href {https://doi.org/10.1287/moor.12.3.415} {\path{doi:10.1287/moor.12.3.415}}.

\bibitem{pace21}
Leon Kellerhals, Tomohiro Koana, Andr\'{e} Nichterlein, and Philipp Zschoche.
\newblock {The PACE 2021 Parameterized Algorithms and Computational Experiments Challenge: Cluster Editing}.
\newblock In Petr~A. Golovach and Meirav Zehavi, editors, {\em 16th International Symposium on Parameterized and Exact Computation (IPEC 2021)}, volume 214 of {\em Leibniz International Proceedings in Informatics (LIPIcs)}, pages 26:1--26:18, Dagstuhl, Germany, 2021. Schloss Dagstuhl -- Leibniz-Zentrum f{\"u}r Informatik.
\newblock \href {https://doi.org/10.4230/LIPIcs.IPEC.2021.26} {\path{doi:10.4230/LIPIcs.IPEC.2021.26}}.

\bibitem{pace24}
Philipp Kindermann, Fabian Klute, and Soeren Terziadis.
\newblock {The PACE 2024 Parameterized Algorithms and Computational Experiments Challenge: One-Sided Crossing Minimization}.
\newblock In \'{E}douard Bonnet and Pawe{\l} Rz\k{a}\.{z}ewski, editors, {\em 19th International Symposium on Parameterized and Exact Computation (IPEC 2024)}, volume 321 of {\em Leibniz International Proceedings in Informatics (LIPIcs)}, pages 26:1--26:20, Dagstuhl, Germany, 2024. Schloss Dagstuhl -- Leibniz-Zentrum f{\"u}r Informatik.
\newblock \href {https://doi.org/10.4230/LIPIcs.IPEC.2024.26} {\path{doi:10.4230/LIPIcs.IPEC.2024.26}}.

\bibitem{knop2018scheduling}
Du{\v{s}}an Knop and Martin Kouteck{\`y}.
\newblock Scheduling meets n-fold integer programming.
\newblock {\em Journal of Scheduling}, 21:493--503, 2018.
\newblock \href {https://doi.org/10.1007/s10951-017-0550-0} {\path{doi:10.1007/s10951-017-0550-0}}.

\bibitem{Kratsch13}
Stefan Kratsch.
\newblock On polynomial kernels for integer linear programs: Covering, packing and feasibility.
\newblock In Hans~L. Bodlaender and Giuseppe~F. Italiano, editors, {\em Algorithms -- ESA 2013}, pages 647--658, Berlin, Heidelberg, 2013. Springer Berlin Heidelberg.
\newblock \href {https://doi.org/10.1007/978-3-642-40450-4_55} {\path{doi:10.1007/978-3-642-40450-4_55}}.

\bibitem{kratsch2018randomized}
Stefan Kratsch.
\newblock A randomized polynomial kernelization for vertex cover with a smaller parameter.
\newblock {\em SIAM Journal on Discrete Mathematics}, 32(3):1806--1839, 2018.
\newblock \href {https://doi.org/10.1137/16M1104585} {\path{doi:10.1137/16M1104585}}.

\bibitem{KratschMMPS21}
Stefan Kratsch, Tom{\'{a}}s Masar{\'{\i}}k, Irene Muzi, Marcin Pilipczuk, and Manuel Sorge.
\newblock Optimal discretization is fixed-parameter tractable.
\newblock In D{\'{a}}niel Marx, editor, {\em Proceedings of the 2021 {ACM-SIAM} Symposium on Discrete Algorithms, {SODA} 2021, Virtual Conference, January 10 - 13, 2021}, pages 1702--1719. {SIAM}, 2021.
\newblock \href {https://doi.org/10.1137/1.9781611976465.103} {\path{doi:10.1137/1.9781611976465.103}}.

\bibitem{Kratsch14}
Stefan Kratsch and Vuong~Anh Quyen.
\newblock On kernels for covering and packing ilps with small coefficients.
\newblock In Marek Cygan and Pinar Heggernes, editors, {\em Parameterized and Exact Computation}, pages 307--318, Cham, 2014. Springer International Publishing.
\newblock \href {https://doi.org/10.1007/978-3-319-13524-3_26} {\path{doi:10.1007/978-3-319-13524-3_26}}.

\bibitem{KratschW20}
Stefan Kratsch and Magnus Wahlstr{\"{o}}m.
\newblock Representative sets and irrelevant vertices: New tools for kernelization.
\newblock {\em J. {ACM}}, 67(3):16:1--16:50, 2020.
\newblock \href {https://doi.org/10.1145/3390887} {\path{doi:10.1145/3390887}}.

\bibitem{LongS22}
Yaowei Long and Thatchaphol Saranurak.
\newblock Near-optimal deterministic vertex-failure connectivity oracles.
\newblock In {\em 63rd {IEEE} Annual Symposium on Foundations of Computer Science, {FOCS} 2022, Denver, CO, USA, October 31 - November 3, 2022}, pages 1002--1010. {IEEE}, 2022.
\newblock \href {https://doi.org/10.1109/FOCS54457.2022.00098} {\path{doi:10.1109/FOCS54457.2022.00098}}.

\bibitem{lueker1975two}
George~S Lueker.
\newblock {\em Two NP-complete problems in nonnegative integer programming}.
\newblock Princeton University. Department of Electrical Engineering, 1975.

\bibitem{MARX2006394}
Dániel Marx.
\newblock Parameterized graph separation problems.
\newblock {\em Theoretical Computer Science}, 351(3):394--406, 2006.
\newblock Parameterized and Exact Computation.
\newblock \href {https://doi.org/10.1016/j.tcs.2005.10.007} {\path{doi:10.1016/j.tcs.2005.10.007}}.

\bibitem{MehlhornSU97}
Kurt Mehlhorn, R.~Sundar, and Christian Uhrig.
\newblock Maintaining dynamic sequences under equality tests in polylogarithmic time.
\newblock {\em Algorithmica}, 17(2):183--198, 1997.
\newblock \href {https://doi.org/10.1007/BF02522825} {\path{doi:10.1007/BF02522825}}.

\bibitem{Miller60}
C.~E. Miller, A.~W. Tucker, and R.~A. Zemlin.
\newblock Integer programming formulation of traveling salesman problems.
\newblock {\em J. ACM}, 7(4):326–329, October 1960.
\newblock \href {https://doi.org/10.1145/321043.321046} {\path{doi:10.1145/321043.321046}}.

\bibitem{Nemhauser88}
George Nemhauser and Laurence Wolsey.
\newblock {\em The Scope of Integer and Combinatorial Optimization}, chapter I.1, pages 1--26.
\newblock John Wiley and Sons, Ltd, 1988.
\newblock \href {https://doi.org/10.1002/9781118627372.ch1} {\path{doi:10.1002/9781118627372.ch1}}.

\bibitem{PilipczukWrochna18}
Michał Pilipczuk and Marcin Wrochna.
\newblock On space efficiency of algorithms working on structural decompositions of graphs.
\newblock {\em {ACM} Trans. Comput. Theory}, 9(4):18:1--18:36, 2018.
\newblock \href {https://doi.org/10.1145/3154856} {\path{doi:10.1145/3154856}}.

\bibitem{Rohwedder25}
Lars Rohwedder and Karol W\k{e}grzycki.
\newblock {Fine-Grained Equivalence for Problems Related to Integer Linear Programming}.
\newblock In Raghu Meka, editor, {\em 16th Innovations in Theoretical Computer Science Conference (ITCS 2025)}, volume 325 of {\em Leibniz International Proceedings in Informatics (LIPIcs)}, pages 83:1--83:18, Dagstuhl, Germany, 2025. Schloss Dagstuhl -- Leibniz-Zentrum f{\"u}r Informatik.
\newblock \href {https://doi.org/10.4230/LIPIcs.ITCS.2025.83} {\path{doi:10.4230/LIPIcs.ITCS.2025.83}}.

\bibitem{Thomasse10}
St\'{e}phan Thomass\'{e}.
\newblock A 4k2 kernel for feedback vertex set.
\newblock {\em ACM Trans. Algorithms}, 6(2), April 2010.
\newblock \href {https://doi.org/10.1145/1721837.1721848} {\path{doi:10.1145/1721837.1721848}}.

\bibitem{Wahlstrom22}
Magnus Wahlstr{\"{o}}m.
\newblock Quasipolynomial multicut-mimicking networks and kernels for multiway cut problems.
\newblock {\em {ACM} Trans. Algorithms}, 18(2):15:1--15:19, 2022.
\newblock \href {https://doi.org/10.1145/3501304} {\path{doi:10.1145/3501304}}.

\bibitem{williams2002linear}
Justin~C Williams.
\newblock A linear-size zero—one programming model for the minimum spanning tree problem in planar graphs.
\newblock {\em Networks: An International Journal}, 39(1):53--60, 2002.
\newblock \href {https://doi.org/10.1002/net.10010} {\path{doi:10.1002/net.10010}}.

\bibitem{Wlodarczyk2024}
Michal Wlodarczyk.
\newblock Does subset sum admit short proofs?
\newblock In Juli{\'{a}}n Mestre and Anthony Wirth, editors, {\em 35th International Symposium on Algorithms and Computation, {ISAAC} 2024, December 8-11, 2024, Sydney, Australia}, volume 322 of {\em LIPIcs}, pages 58:1--58:22. Schloss Dagstuhl - Leibniz-Zentrum f{\"{u}}r Informatik, 2024.
\newblock \href {https://doi.org/10.4230/LIPICS.ISAAC.2024.58} {\path{doi:10.4230/LIPICS.ISAAC.2024.58}}.

\end{thebibliography}


\end{document}